\title{Geometric Entropies and their Hamiltonian Flows}
\author[1]{Xi Dong,} \emailAdd{xidong@ucsb.edu}
\author[1]{Donald Marolf} \emailAdd{marolf@ucsb.edu}
\author[1,2]{and Pratik Rath} \emailAdd{pratik\_rath@berkeley.edu}
\affiliation[1]{Department of Physics, University of California, Santa Barbara, CA 93106, USA}
\affiliation[2]{Center for Theoretical Physics and Department of Physics,
University of California, Berkeley, CA 94720, USA}
\abstract{
In holographic theories, the Hubeny-Rangamani-Takayanagi (HRT) area operator plays a key role in our understanding of the emergence of semiclassical Einstein-Hilbert gravity. 
When higher derivative corrections are included, the role of the area is instead played by a more general functional known as the geometric entropy. 
It is thus of interest to understand the flow generated by the geometric entropy on the classical phase space. In particular, the fact that the associated flow in Einstein-Hilbert or Jackiw-Teitelboim (JT) gravity induces a relative boost between the left and right entanglement wedges is deeply related to the fact that gravitational dressing promotes the von Neumann algebra of local fields in each wedge to type II.  This relative boost is known as a boundary-condition-preserving (BCP) kink-transformation. In a general theory of gravity (with arbitrary higher-derivative terms),  it is straightforward to show that the flow continues to take the above geometric form  when acting on a spacetime where the HRT surface is the bifurcation surface of a Killing horizon.  However, the form of the flow on other spacetimes is less clear.

In this paper, we use the manifestly-covariant Peierls bracket to explore such flows in two-dimensional theories of JT gravity coupled to matter fields with higher derivative interactions.
The results no longer take a purely geometric form and, instead, demonstrate new features that should be expected of such flows in general higher derivative theories.
We also show how to obtain  the above flows using Poisson brackets.}
\begin{document}

\maketitle

\section{Introduction}
\label{sec:intro}
The Bekenstein-Hawking formula for black hole entropy provides a semiclassical window into quantum gravity \cite{Bekenstein:1973ur} and motivates the idea of holography \cite{tHooft:1993dmi,Susskind:1994vu}, a feature beautifully manifested by the AdS/CFT correspondence \cite{Maldacena_1999}. 
In the setting of AdS/CFT, Ryu and Takayanagi (RT)  \cite{Ryu:2006bv,Ryu:2006ef} argued that a similar formula should describe von Neumann entropies of regions in the CFT.
The covariant Hubeny-Rangamani-Takayanagi (HRT) version of their formula is \cite{Hubeny:2007xt}
\begin{equation}
	S(R) = \frac{\mathcal{A}(\gamma_R)}{4G},
\end{equation}
where $R$ is a region in the boundary CFT and $\gamma_R$ is the minimal, extremal surface homologous to $R$ and anchored to $\partial R$. 
For more general theories of gravity, the functional $\frac{\mathcal{A}(\gamma_R)}{4G}$ in the HRT formula is replaced by the geometric entropy, $\sigma(\gamma_R)$, which includes higher derivative corrections \cite{Wald:1993nt,Dong:2013qoa,Camps:2013zua,Miao:2014nxa}. 

The HRT formula is a remarkable demonstration of how geometric features of the bulk theory are associated with quantum entanglement in the dual CFT, and it has proven to be an invaluable tool in studying such entanglement.  
Although best understood in the context of AdS/CFT, the fact that the HRT formula itself follows from computing the bulk gravitational path integral \cite{Lewkowycz:2013nqa,Dong:2016hjy,Colin-Ellerin:2020mva,Colin-Ellerin:2021jev} suggests that it could apply more generally.\footnote{See Ref.~\cite{Colafranceschi:2023moh} for related results that follow without assuming the existence of a dual CFT. Attempts to apply the HRT formula in spacetimes that are not asymptotically locally AdS include \cite{Sanches:2016sxy,Nomura:2018kji,Dong:2020uxp,Grado-White:2020wlb,Bousso:2022hlz,Bousso:2023sya}.}  
It is thus of great interest to understand the geometric entropy in detail.

The present work seeks to understand the algebraic significance of the geometric entropy by analyzing the Hamiltonian flow generated by the geometric entropy in the classical limit.  In this limit, the geometric entropy $\sigma$ \ is an observable on the gravitational phase space\footnote{This essentially follows from the Lewkowycz-Maldacena derivation \cite{Lewkowycz:2013nqa} of the RT formula \cite{Ryu:2006bv,Ryu:2006ef}, appropriately generalized to the HRT case~\cite{Dong:2016hjy} and higher derivative gravity \cite{Dong:2013qoa,Camps:2013zua,Dong:2017xht,Dong:2019piw}.} and thus defines a Hamiltonian vector field via the usual formula
\begin{equation}\label{eq:flow}
	\frac{dO}{ds}= \{O,\sigma(\gamma_R)\},
\end{equation}
where $O$ is an arbitrary observable and $\{f,g\}$ is the Poisson bracket defined using the symplectic structure of the phase space.  
Here we follow the standard convention for Hamiltonian flow. This agrees with the usual convention for the one-sided modular flow as explained in Ref.~\cite{xi}, but differs in sign from that used in Ref.~\cite{Kaplan:2022orm} as well as the Connes cocycle flow discussed in Ref.~\cite{Bousso:2020yxi}.

We will refer to the flow in \Eqref{eq:flow} as \textit{geometric entropy flow}.
Below, we will find it useful to replace the Poisson bracket with the completely equivalent Peierls bracket \cite{Peierls:1952cb}, which in particular maintains manifest covariance; see \secref{sub:peierls} for a review of this formalism.

For the case of Einstein-Hilbert gravity, it was argued in Ref.~\cite{Bousso:2020yxi} and then shown in Ref.~\cite{Kaplan:2022orm} that the geometric entropy flow takes a simple geometric form known as a boundary-condition-preserving (BCP) kink-transformation.
To describe this flow, consider any Cauchy slice $\Sigma$ that contains the HRT-surface $\gamma_R$. 
All observables are then determined by the initial data ${\cal C}$ on such a slice, i.e., the induced metric $h_{ij}$ and the extrinsic curvature tensor $K_{ij}$. This is of course the case for any two-derivative theory, but it remains true in the presence of higher derivative corrections when such terms are treated perturbatively. 
As shown in \figref{fig:kink}, the intrinsic coordinates on $\Sigma$ can be decomposed into $x_{||}$  (tangential to $\gamma_R$) and $x_{\perp}$ (which measures proper distance normal to $\gamma_R$). The BCP kink-transformation is  then defined by replacing ${\cal C}$ with new Cauchy data ${\cal C}_s$ according to the rule
\begin{align}\label{eq:kink}
        K_{\perp \perp}&\rightarrow K_{\perp \perp} + 2\pi s \,\delta_{\Sigma}\(\gamma_R\),
\end{align}
with all other Cauchy data left unchanged.  
In \Eqref{eq:kink},
$\delta_{\Sigma}\(\gamma_R\)$ is a delta function that at any fixed value of $x_{||}$ satisfies $\int_{\text{fixed }x_{||}} dx_\perp f(x_\perp) \delta_{\Sigma}\(\gamma_R\) = f(0)$.
That an analogous result holds for Jackiw-Teitelboim (JT) gravity with minimally-coupled matter can be seen from the fact that pure JT gravity can be obtained by dimensional reduction of Einstein-Hilbert gravity in the near-horizon limit associated with extremal black holes.

\begin{figure}
\centering
        \includegraphics[width=0.8
        \textwidth]{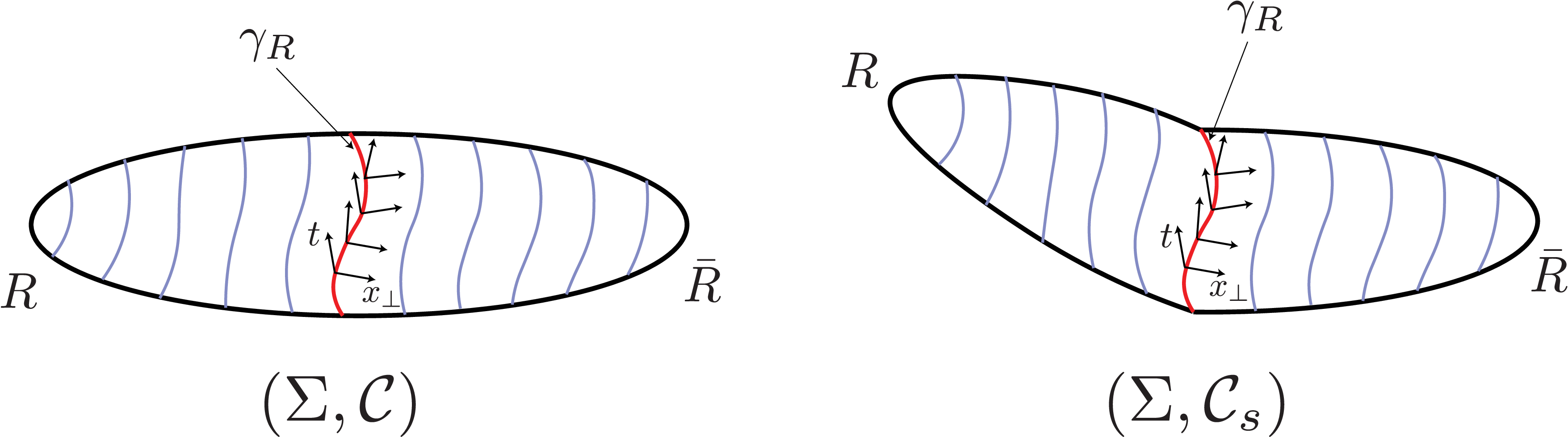}
        \caption{The boundary-condition-preserving (BCP) kink-transformation modifies the Cauchy data ${\cal C}$ on a bulk Cauchy slice $\Sigma$ to new data ${\cal C}_s$ by adding a delta function in the codimension-1 extrinsic curvature component $K_{\perp \perp}$, where as shown $x_{\perp}$ represents the intrinsic coordinate on $\Sigma$ that is orthogonal to the HRT surface $\gamma_R$. The qualifier BCP emphasizes that the slice $\Sigma$ with the new data ${\cal C}_s$ remains glued to the asymptotic boundary regions $R$ and $\bar{R}$ in precisely the same way as for $\Sigma$ with the original data ${\cal C}$.}
        \label{fig:kink}
\end{figure}

As shown in \figref{fig:kink}, this defines new initial data that can be evolved to obtain a new spacetime and to thus compute the effect of this flow on any observable. 
Further, the qualifer BCP is meant to emphasize that the asymptotic boundary conditions are left invariant, i.e., $\Sigma$ with Cauchy data ${\cal C}_s$ remains glued to the boundary in the same way as before the flow. 
As expected from the flow generated by a diffeomorphism-invariant observable, for Einstein-Hilbert gravity it was shown in Ref.~\cite{Bousso:2020yxi} that the BCP kink-transformation preserves the constraint equations and provides valid initial data for evolution.  

It was suggested in Refs.~\cite{Bousso:2020yxi,Kaplan:2022orm} that this simple geometric description could be universally valid in general gravitational theories that include arbitrary perturbative higher derivative terms. 
For a particular higher derivative theory of topologically massive gravity in 2+1 dimensions with a gravitational Chern-Simons term, this was then established in Ref.~\cite{Kaplan:2023oaj}. 

It is also straightforward to show for a general theory of gravity (with arbitrary perturbative higher derivative terms)  that the flow again acts as  a BCP kink transformation on solutions where the HRT surface is the bifurcation surface of a Killing horizon.  We give a general abstract argument here, but an argument by explicit computation is also given in Appendix~\ref{app:sigmaW}.  To begin the abstract argument, note that the Killing field allows one to remove the delta-function in \Eqref{eq:kink} by moving the part of the Cauchy surface on e.g.\ the right side of the HRT surface by a unit Killing parameter along the Killing field while leaving the part on the left side  unchanged.  Doing so leaves us with Cauchy data that is the original $\mathcal{C}$ except for a shift of the coordinate labels along the intersection of our Cauchy surface with the right part of any boundary and, in the presence of Maxwell or Yang-Mills fields, a possible additional overall charge rotation.\footnote{\label{foot:charge} This possibility stems from the fact that the Killing field may preserve charged fields only up to such charge rotations.} This is precisely the shift in Cauchy data that is generated by $\frac{2\pi}{\kappa}(E-\Omega J - \Phi Q)$  where $\Omega$ is the angular velocity of the horizon-generating Killing field, $\Phi$ is the electric potential of the horizon, and $\kappa$ is the surface gravity of the horizon.  Thus,  on such spacetimes, the BCP kink-transformation coincides  with the flow generated by the Hamiltonian vector field of  $\frac{2\pi}{\kappa}(E-\Omega J - \Phi Q)$. 

Let us therefore recall the first law of horizon thermodynamics derived in Refs.~\cite{Wald:1993nt,Iyer:1994ys}, which shows that arbitrary variations in $E-\Omega J - \Phi Q$ about the above spacetimes are precisely $\frac{\kappa}{2\pi}$ times the corresponding variations in $\sigma_W$ where $\sigma_W$ is the (Wald) entropy defined in Ref.~\cite{Wald:1993nt}.  The same proportional relationship must thus hold on the above class of spacetimes between the Hamiltonian vector fields of $\sigma_W$ and $E-\Omega J - \Phi Q$.  Putting this together with the above observations then shows that the BCP kink-transformation coincides on such spacetimes with the flow generated by the Hamiltonian vector field of $\sigma_W$.  Finally, let us also recall that the extrinsic curvature of a Killing-horizon bifurcation surface must vanish along with all of its derivatives, and that the geometric entropy $\sigma$ coincides with $\sigma_W$ up to terms at least quadratic in extrinsic curvatures and their derivatives \cite{Dong:2013qoa}.  Thus all variations of $\sigma$ about the above spacetimes coincide with those of $\sigma_W$.  It follows that their Hamiltonian vector fields agree and, in particular, that $\sigma$ generates the BCP kink-transformation \Eqref{eq:kink} when acting on such spacetimes.

However, in more general circumstances we show below that the flow generated by geometric entropy typically fails to coincide with the simple geometric form \Eqref{eq:kink}.  This failure is associated with the fact that the BCP kink-transformation by itself does not always preserve the constraint equations in higher derivative gravity. We will show this explicitly for the example theories in \secref{sec:hdev}; see, e.g., Eq.~\er{eq:phi_ch} and the comments thereafter.
But for the moment we merely note that any hope that the geometry entropy flow in higher derivative gravity is generally a BCP kink-transformation will be spoiled by field redefinitions. 

In particular, recall that a perturbative field redefinition can be thought of as a coordinate transformation on phase space.
It modifies the perturbative higher-derivative interactions of the theory and maps solutions to solutions, while presenting  the solutions in different variables.
The geometric entropy, being an entanglement entropy in the CFT and a well-defined observable on the gravitational phase space, transforms as a scalar under such field redefinitions \cite{Dong:2023bax}. Thus, the flow generated by it must be covariant. As a result, if we begin with Einstein-Hilbert gravity and apply a field redefinition that mixes $K_{\perp \perp}$ with non-metric degrees of freedom, we necessarily obtain a theory where the flow differs in form from the BCP kink-transformation of \Eqref{eq:kink}. 

In order to better understand the flows generated by general geometric entropies, we will compute results for some example theories that illustrate generic features expected to arise when including perturbative higher-derivative and non-minimal couplings.

In general, as for the BCP kink-transformation, the flow takes a relatively simple form in terms of the transformation of the initial data on a Cauchy slice containing the HRT surface.
We analyze this by computing the Peierls brackets of the geometric entropy with the initial data on such a Cauchy slice.

We start in \secref{sec:ein} by warming up with a simple two-dimensional dilaton theory given by JT gravity coupled to a massless scalar field.\footnote{For  simplicity we work with JT gravity without a cosmological constant.} While the geometric entropy flow in this theory is already well-known to be just the BCP kink-transformation, this example
allows us to illustrate the method being used and to describe the geometry of Lorentzian cones which play an important role in the Peierls bracket calculation.
In \secref{sec:hdev}, we then consider two theories of JT gravity coupled to massless scalar fields with higher-derivative interactions.
Each of these theories has a non-trivial contribution to the geometric entropy coming from these higher-derivative terms. 
In each of these examples, we work to leading order in the higher-derivative coupling and apply our formalism to work out the flow generated by the geometric entropy.
The general feature we find is that the transformation of the initial data involves the introduction of additional singularities in other fields such as the dilaton and matter fields in addition to the singularities in the extrinsic curvature predicted by \Eqref{eq:kink}.
These singularities are localized at the HRT surface as expected.

We then discuss various aspects of our work in \secref{sec:discussion}. These include integrating the linearized flow to finite flow parameter, as well as the role of extremality in our calculation. We also comment on the connection of our results to modular flow in AdS/CFT (on which we will further elaborate in Ref.~\cite{xi}) and describe a new potential way to derive the geometric entropy using Lorentzian methods.

Finally, we provide several appendices to complement our results. We briefly review the covariant phase space formalism in Appendix~\ref{app:covphase}. In Appendix~\ref{app:poisson}, we then repeat our calculations of the geometric entropy flow using the Poisson/Dirac bracket formalism. In Appendix~\ref{app:lorentzian}, we define a quantity called susceptibility in an analogous fashion to the Euclidean derivation of geometric entropy. In Appendix~\ref{app:extra}, we analyze the flow generated by other observables that differ from the geometric entropy at first order in the higher-derivative couplings.


\section{Review}
\label{sec:review}

\subsection{Geometric entropy}
\label{sub:geometric}

The geometric entropy represents a classical contribution to the entropy in gravitational theories. 
An interesting feature of gravity is that the geometric entropy is localized on a codimension-2 surface.
The most familiar such example is the Bekenstein-Hawking entropy of static black holes, $\frac{A}{4G}$, which was computed using the Euclidean gravitational path integral in Ref.~\cite{Gibbons:1976ue}. 
The Euclidean definition of geometric entropy was then significantly extended to include more general states and theories in Refs.~\cite{Lewkowycz:2013nqa,Dong:2013qoa,Camps:2013zua}. 
Although motivated by the calculation of entropy in the dual boundary CFT in the context of AdS/CFT, a corresponding analysis can sometimes be applied in non-AdS contexts as well.  

We now review the definition of geometric entropy, closely following the presentation in Ref.~\cite{Dong:2019piw}.\footnote{We refer the reader to Appendix B of Ref.~\cite{Dong:2019piw} for more details on the variational principle for conical spacetimes and the geometric entropy.}
As argued in Ref.~\cite{Lewkowycz:2013nqa}, the geometric entropy can be computed by studying the response of the gravitational action to the insertion of a Euclidean conical defect. 
To be precise, one first considers Euclidean solutions to the equations of motion with boundary conditions that lead to a well defined variational principle.  
For example, in asymptotically AdS spacetimes, one can specify the induced (conformally-rescaled) metric at the asymptotic boundary. 
It was shown in Ref.~\cite{Dong:2019piw} that this variational principle can be extended to include a  codimension-2 Euclidean conical defect of specified opening angle $2\pi m$ without specifying the  induced geometry or other fields on the defect. 
This extension of the variational principle then allows one to compute the geometric entropy by varying the defect angle parametrized by $m$.\footnote{For simplicity, $m$ will be taken to be irrational for some of the discussion below. In certain situations, one can suitably take limits to rational $m$ as discussed in Ref.~\cite{Dong:2019piw}.}

To do so, one first considers a space of configurations which define the meaning of a conical defect for general theories of gravity.
Near the conical defect, the fields take a form called the ``triple expansion"; see Ref.~\cite{Dong:2019piw} for details.
For such configurations, one defines an action of the form:
\begin{equation}\label{eq:Itilde}
	\tilde{I}[g] = \lim_{\epsilon\rightarrow 0^+}\left( \int_{r\geq \epsilon} d^{d+1}x \sqrt{g}\mathcal{L} + I^{\epsilon}_{\text{CT}}\right),
\end{equation}
where $\mathcal{L}$ is the Lagrangian and the action excludes localized contributions from the defect which, without loss of generality, has been placed at $r=0$.
In the presence of higher derivative corrections, local counterterms at the defect (called $I^{\epsilon}_{\text{CT}}$ in \Eqref{eq:Itilde}) are generally required to cancel power-law divergences in the limit $\epsilon \rightarrow 0$. The action also includes the usual boundary terms at asymptotic boundaries (which are not explicitly shown above).
Such an action defines a well-defined variational principle for spacetimes with a conical defect of specified opening angle $2\pi m$.
With the above definition of the action for conical defect spacetimes in general gravitational theories, the geometric entropy $\sigma$ can be defined as the response of the Euclidean action to the insertion of a Euclidean conical defect, i.e.,
\begin{equation}\label{eq:higherent}
\sigma = -	\frac{d\tilde{I}_m}{dm}\bigg|_{m=1}.
\end{equation}
It is important to note that, in addition to the Wald entropy \cite{Wald:1993nt} terms, the geometric entropy includes the extrinsic curvature corrections described in Refs.~\cite{Dong:2013qoa,Camps:2013zua} although there is currently no simple, explicit formula that works for general higher derivative gravity to all orders in the coupling constants. 
One can also show that the equations of motion fix the location of the defect relative to other features of the geometry, thus defining the HRT surface \cite{Dong:2017xht,Dong:2019piw}.

An important aspect of the Euclidean calculation is that the extrinsic curvature terms in the geometric entropy \cite{Dong:2013qoa} arise in a qualitatively different manner from the Wald entropy terms.
When perturbing the opening angle from $m=1$ to $m=1+\delta m$, one can expand the metric as $g_{1+\delta m} = g +\delta g$. 
In this calculation of geometric entropy, the Wald entropy terms arise from contributions to the Lagrangian that are manifestly first order in $\delta g$, whereas the extrinsic curvature terms arise from certain terms in the Lagrangian that are formally second order in $\delta g$.
This is  because such second-order terms integrate to give contributions to the action that are only first order in $\delta m$, a fact which is signaled by the observation that such integrals have logarithmic divergences if not treated carefully \cite{Dong:2013qoa}.
This feature arises from the structure of the triple expansion near the defect.
In \secref{sec:discussion} and Appendix~\ref{app:lorentzian}, we contrast this with a potential Lorentzian derivation of geometric entropy which only relies on formally first-order terms in $\delta g$ due to a simpler ``double expansion" structure in Lorentzian signature.


\subsection{Peierls bracket}
\label{sub:peierls}

Classical mechanics textbooks typically define the Hamiltonian flow on a given phase space in terms of the Poisson bracket.  
However, the fact that the Poisson bracket breaks manifest covariance can render it cumbersome for computations in diffeomorphism-invariant theories.
Furthermore, in higher derivative theories an additional complication arises from the fact that the computation of canonical momenta requires an explicit reduction-of-order procedure\footnote{See Appendix~\ref{app:poisson} where such a procedure is carried out explicitly.}.  

These features motivate us to seek a more convenient formalism in which to carry out our analysis.
To preserve manifest covariance, we work in the covariant phase space formalism which is reviewed in Appendix~\ref{app:covphase}.
In this formalism, the concept of the Peierls bracket is well-adapted to our needs and is completely equivalent to the standard Poisson bracket. 
In general, the Peierls bracket $\{f,g\}$ measures the linear response of observable $f$ under a deformation of the theory defined by the observable $g$.   
One can show that the Peierls bracket is antisymmetric so that  (up to a minus sign) one may alternately compute the linear response of $g$ under a deformation of the theory by $f$. 
The calculation of a Peierls bracket (and its association with linear response) is similar in spirit to that of a Wightman propagator in quantum field theory.  
Indeed, this relation is manifest in quantum field theory through the well-known relation that the commutator function is twice the imaginary part of the Wightman propagator.   That the Peierls bracket describes the classical limit of the quantum commutator also follows directly from the less well-known Schwinger variational principle \cite{Schwinger:1951ex,DeWitt:1964mxt,DeWitt:1984ojp,DeWitt:2003pm}.

\begin{figure}
\centering
        \includegraphics[width=0.8
        \textwidth]{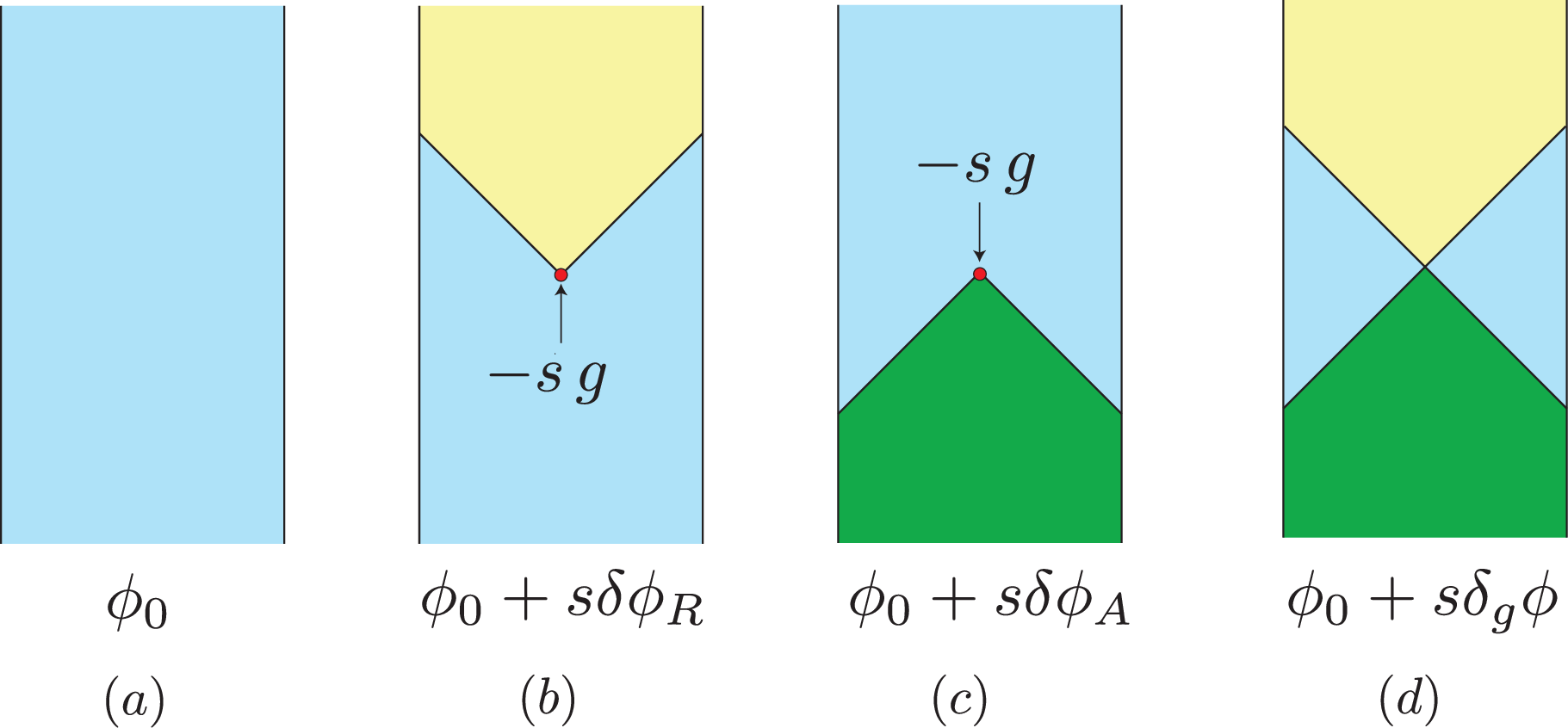}
        \caption{The computation of Peierls bracket $\{f,g\}$ involves four solutions, depicted here for a codimension-2 operator $g$: a) The original solution $\phi_0$ of the undeformed action, b) the retarded solution $\phi_0+s\,\delta\phi_R$ with a source $-s\,g$ (red), c) the advanced solution $\phi_0 + s\,\delta\phi_A$  with the same source, d) the linearized solution $\phi_0+s\,\delta_g \phi$ to the undeformed action, where $\d_g \phi = \d \phi_R-\d\phi_A$.}
        \label{fig:peierls}
\end{figure}

We now describe the computation of the Peierls bracket in more detail.
Consider a theory defined by an action functional $I_0$, and consider two observables $f$ and $g$ on the classical phase space (i.e., the space of classical solutions quotiented by zero modes) of this theory.
In order to compute the Peierls bracket $\{f,g\}$, the procedure is as follows (see \figref{fig:peierls} for an illustration):
\begin{itemize}
		\item Consider an arbitrary configuration $\phi_0$ that solves the equations of motion given by the action $I_0$.
        \item Now, deform the action to $I_{s}\equiv I_0 - s\,g$.
        \item Find a new configuration that solves the linearized (in $s$) equations of motion for the deformed action $I_s$ with a retarded boundary condition, i.e., such that the new configuration $\phi_0+s\, \delta\phi_R$, called the retarded solution, is identical to $\phi_0$ in the past of the source $g$.
        \item Similarly, find an advanced solution $\phi_0+s\, \delta\phi_A$ to the linearized equations of motion of $I_s$ which is identical to $\phi_0$ in the future of the source $g$.
		\item Consider the difference $\delta_g \phi=\delta\phi_R-\delta\phi_A$. 
        This is a solution to the linearized equations of motion of the original, undeformed action $I_0$.
        \item With all these ingredients in hand, the Peierls bracket is then given by
        \begin{equation}\label{eq:flowp}
                \{f,g\}[\phi_0]\equiv \frac{d}{ds}f[\phi_0+s\,\delta_g \phi]\Big\vert_{s=0}.
        \end{equation}
\end{itemize}

Note in particular that the Peierls bracket computation is manifestly covariant and does not require one to define either particular coordinates on phase space or a particular foliation of spacetime. Also note that the procedure above needs an extension of $g$ from the phase space to the off-shell configuration space (so we can use it to deform the action), but  one can show that the final Peierls bracket \er{eq:flowp} is independent of such extensions and is thus umambiguously defined; see e.g.\ Eq. (2.10) of Ref.~\cite{Marolf:1993af}.

Our goal in this paper will be to use this formalism  to understand the flow  
\begin{equation}\label{eq:geomflow}
		\frac{dO}{ds}= \{O,\sg\}
\end{equation}
defined by taking Peierls brackets with $\sigma$. 
Here $O$ is an arbitrary diffeomorphism invariant observable, and we remind the reader that $\sg$ is implictly evaluated at the HRT surface $\gamma$. 
At the linearized level, the addition of $\sg$ into the action generates a distributional source localized on $\gamma$.
In order to find retarded/advanced solutions, we can then solve the equations of motion together with an appropriate matching condition across the distributional source.
This can be done from first principles on a case-by-case basis for different higher derivative theories. 

\Eqref{eq:flowp} makes it clear that the linearized solution $\d_g \p$ precisely describes the change in the original solution under the geometric entropy flow \Eqref{eq:geomflow}. It is manifestly covariant, since it directly describes the change in the entire spacetime all at once without picking a foliation. Thus, for our cases of interest, we will simply compute $\d_g \p$ from which we can then read off the change in any particular observable $O$ . In the simple models that we will study, we will find this linearized solution in closed form.  However, in general it is easier to proceed in analogy with our treatment of the BCP kink-transformation above in which one explicitly describes only the change in the initial data on a slice passing through the HRT surface, from which other changes are then determined by the equations of motion.

\section{Warm-up: JT Gravity with Minimally Coupled Scalar}
\label{sec:ein}

We start off by considering JT gravity with zero cosmological constant coupled to a massless scalar $\y$.\footnote{The cosmological constant has been set to zero for convenience, and adding it in will not significantly modify our conclusions.} The action (ignoring boundary terms) is
\begin{equation}\label{eq:action}
	I_0 = \frac{1}{2} \int d^2 x \sqrt{-g}\[\phi R-(\na \y)^2 \],
\end{equation}
where $\p$ is the dilaton.
The geometric entropy in this model is well known to be $\sg=2\pi \p$ {evaluated at its extremal point. 

The equations of motion of the above theory obtained by varying $\p$ and $\y$ are given by:
\begin{align}
    R&=0,\\
    \na_\m \na^\m \y&=0.
\end{align}
flat, two-dimensional Minkowski space, which we can parametrize using lightcone coordinates which yield the line element
\begin{align}
    ds^2=du\,dv.
\end{align}
In these coordinates, it is well known that a general solution to the free massless scalar field equation is simply given by
\begin{equation}\label{eq:sumLR}
    \psi(u,v) = \psi_R(u)+\psi_L(v).
\end{equation}
The equations of motion from varying the metric are given by:
\begin{equation}\label{eq:dilEOM}
    g_{\m\n} \[-\fr{1}{2} \na^2\p -\fr{1}{4}(\na\y)^2 \] +\fr{1}{2} \na_\m\na_\n\p +\fr{1}{2} \na_\m\y \na_\n\y =0,
\end{equation}
whose $(uv)$-component tells us that the dilaton is also a sum over left and right movers, i.e.,
\begin{equation}\la{eq:phisumLR}
    \phi(u,v) = \phi_R(u)+\phi_L(v).
\end{equation}
The remaining components of \Eqref{eq:dilEOM} then impose constraints on $\phi_R,\phi_L$.

In order to understand the geometric entropy flow, we can compute the Peierls bracket of an arbitrary observable with $\sg$ by finding the linearized solution generated by $\sg$. The first step is to work out the equations of motion for the modified action $I_s=I_0-s \sg$. In principle, field variations may change the coordinate location $x^\m_\g$ of the extremal surface $\g$. This change $\d x^\m_\g$ contributes to the variation of the action by the amount
\be
\label{eq:dImovegamma}
\d I_s = -s \fr{\pa \sg}{\pa x^\m_\g} \d x^\m_\g.
\ee
However, the extremality condition $\pa \sg/\pa x^\m_\g =0$ requires \Eqref{eq:dImovegamma} to vanish.  As a result, we obtain the same variation of the action if we simply fix $\gamma$ to lie at the origin of our coordinates.

With this understanding, varying the dilaton gives
\begin{equation}\label{eq:ric1}
    R= 4\pi s \d_\g,
\end{equation}
where $\d_\g$ is a codimension-2 delta function defined to satisfy $\int d^2 x\sqrt{-g} f(x) \d_\g=f(0)$ for arbitrary test functions $f$.  \Eqref{eq:ric1} completely fixes the geometry to be flat space with a Lorentzian conical defect at $\g$.

\begin{figure}
\centering
        \includegraphics[width=0.7
        \textwidth]{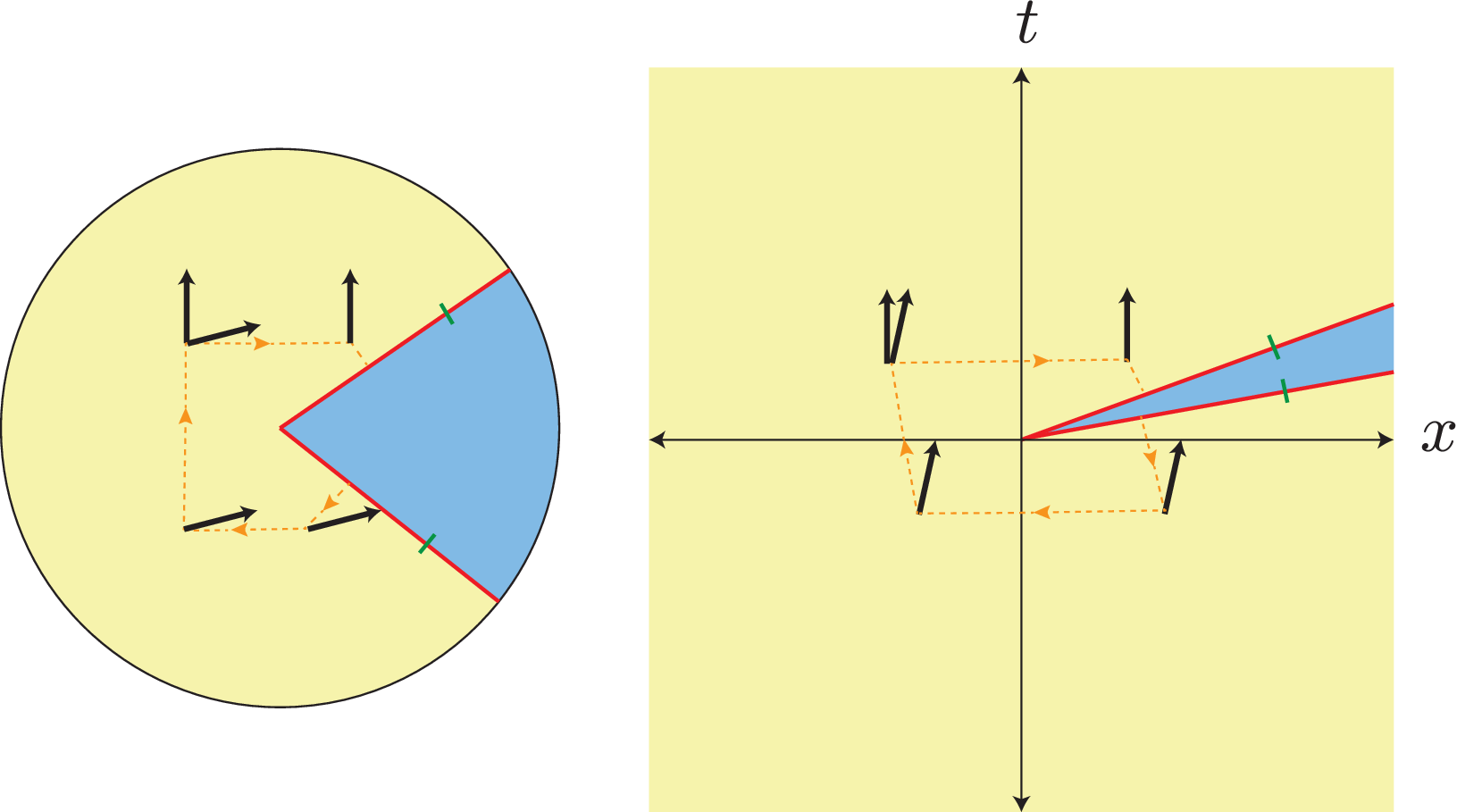}
        \caption{(left): A Euclidean cone is obtained by starting with flat Euclidean space (yellow), identifying two radial slices (red), and deleting the region between them (blue). The space is manifestly flat everywhere except for a curvature delta function at the origin, which can be sensed by the holonomy of a vector parallel-transported around the origin in a loop (dashed orange). (right): Analogous construction of a Lorentzian conical defect is obtained by deleting the region (blue) between two Rindler slices (red) in Minkowski space (yellow) and identifying the slices together. Again, the holonomy of a vector parallel-transported around the origin in a loop (dashed orange) is depicted.}
        \label{fig:cone}
\end{figure}

Since it has not been frequently discussed in the recent literature, it is useful to first review the geometry of a Lorentzian conical defect. 
To set the stage, recall that a Euclidean cone can be constructed from flat (say, two-dimensional) Euclidean space by removing a wedge and identifying the open surfaces as shown in \figref{fig:cone}. 
This identification is allowed due to the $U(1)$ symmetry that relates the induced metrics on the two constant angle slices. 
The signature property of a cone is that a vector parallel-transported around the tip returns with a relative rotation, a holonomy that measures the relevant deficit angle.

In an analogous fashion, one may construct a Lorentzian conical defect by removing a wedge from Minkowski space and identifying the two slices as shown in \figref{fig:cone}.
This again can be done due to the boost symmetry of the original Minkowski space.
One then finds a similar holonomy when one encircles the defect, where the parallel-transported vector returns with a relative boost that measures the strength of the defect.
In fact, in an arbitrary spacetime with a Lorentzian conical defect, we can always zoom in sufficiently close to $\g$ so that it looks like the above example.

There are various charts one could use to describe the geometry on the right side of \figref{fig:cone}. For our calculation here, we find it convenient to work in conformal gauge:
\be
ds^2 = e^{2\w} du dv.
\ee
In this gauge, the Ricci scalar is given by
\be\label{eq:ricc}
\sqrt{-g} R = -4 \pa_{u} \pa_{v} \w.
\ee
Then, by choosing
\begin{equation}
    \w=\pi s\, \q\(-u\)\q\(v\),
\end{equation} where $\q(x)$ is the Heaviside step function, we obtain a chart for the Lorentzian cone that smoothly covers the entire spacetime except the origin where the curvature is singular as expected. This is a convenient choice for the retarded solution, since we have $\w=\pi s$ in the future wedge and $\w=0$ elsewhere as shown in \figref{fig:lcone}. Using \Eqref{eq:ricc}, we find that this solves the equation of motion \Eqref{eq:ric1}.

\begin{figure}
\centering
        \includegraphics[width=0.6
        \textwidth]{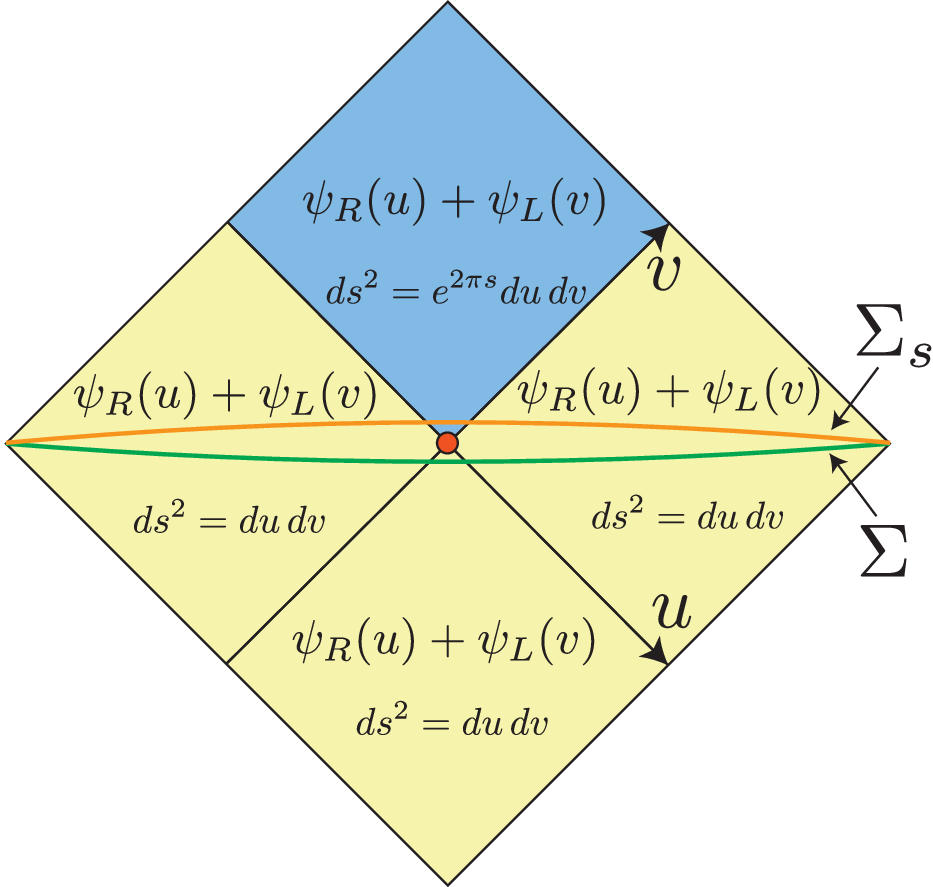}
        \caption{The retarded solution for the massless scalar field $\y$ expressed in terms of a chart that covers the entire spacetime, except the origin (red). The expression for $\y$ in terms of the $u,v$ coordinates shown is independent of $s$. Indeed, the same is true for the dilaton $\p$.  Only the metric conformal factor depends on $s$, and only in the future of $\gamma$ (blue).  The transformation of initial data can be computed by comparing the data on Cauchy slice $\Sigma$ (green) with $\Sigma_s$ (orange).}
        \label{fig:lcone}
\end{figure}

We can then look at the scalar field equation on this conical background. The equation of motion is given by
\begin{equation}
    \na_\m \na^\m \y =2e^{-2\w}\pa_u\pa_v \y = 0.
\end{equation}
This equation is unmodified, as should be expected from the conformal invariance of a massless scalar.
Thus, we simply have the usual massless scalar field solution which is a sum over left and right movers as in \Eqref{eq:sumLR}. It is then straightforward to write down the full retarded solution as shown in \figref{fig:lcone}.

Finally, we can look at the equations of motion obtained by varying the metric, which are the same as \Eqref{eq:dilEOM}, except that the background metric has changed. It is now simplest to work in components. In conformal gauge, the only nonzero Christoffel symbols are
\be\label{eq:christ}
\G^u_{uu} = 2\pa_u \w,\qu
\G^v_{vv} = 2\pa_v \w,
\ee
and therefore the $(uv)$-component becomes
\be
\na_u \na_v \p = \pa_u \pa_v \p = 0
\ee
resulting in a solution of precisely the same form as $\y$ shown in \figref{fig:lcone}. 

The $(uu)$-component then yields the constraint
\be\label{eq:phiuu}
\na_u \na_u \p +(\na_u \y)^2= \pa_u^2 \p - 2\pa_u \w \pa_u \p+(\pa_u \y)^2=0.
\ee
While there is a potential $\d$ function contribution from he term $\pa_u \w=-\pi s\, \d(u)\q(v)$, this contribution vanishes because the extremality condition $\pa_u \phi\vert_{\g}=0$, combined with $\p$ being a sum over left and right movers as shown in \Eqref{eq:phisumLR}, leads to $\pa_u \phi\vert_{u=0}=0$. As a result, \Eqref{eq:phiuu} takes precisely the same form as for $s=0$ and the constraint remains satisfied. A similar argument holds for the $(vv)$-component. Thus, we find that the solution described by \figref{fig:lcone} satisfies all the modified equations of motion everywhere including the origin.

\begin{figure}
\centering
        \includegraphics[width=
        \textwidth]{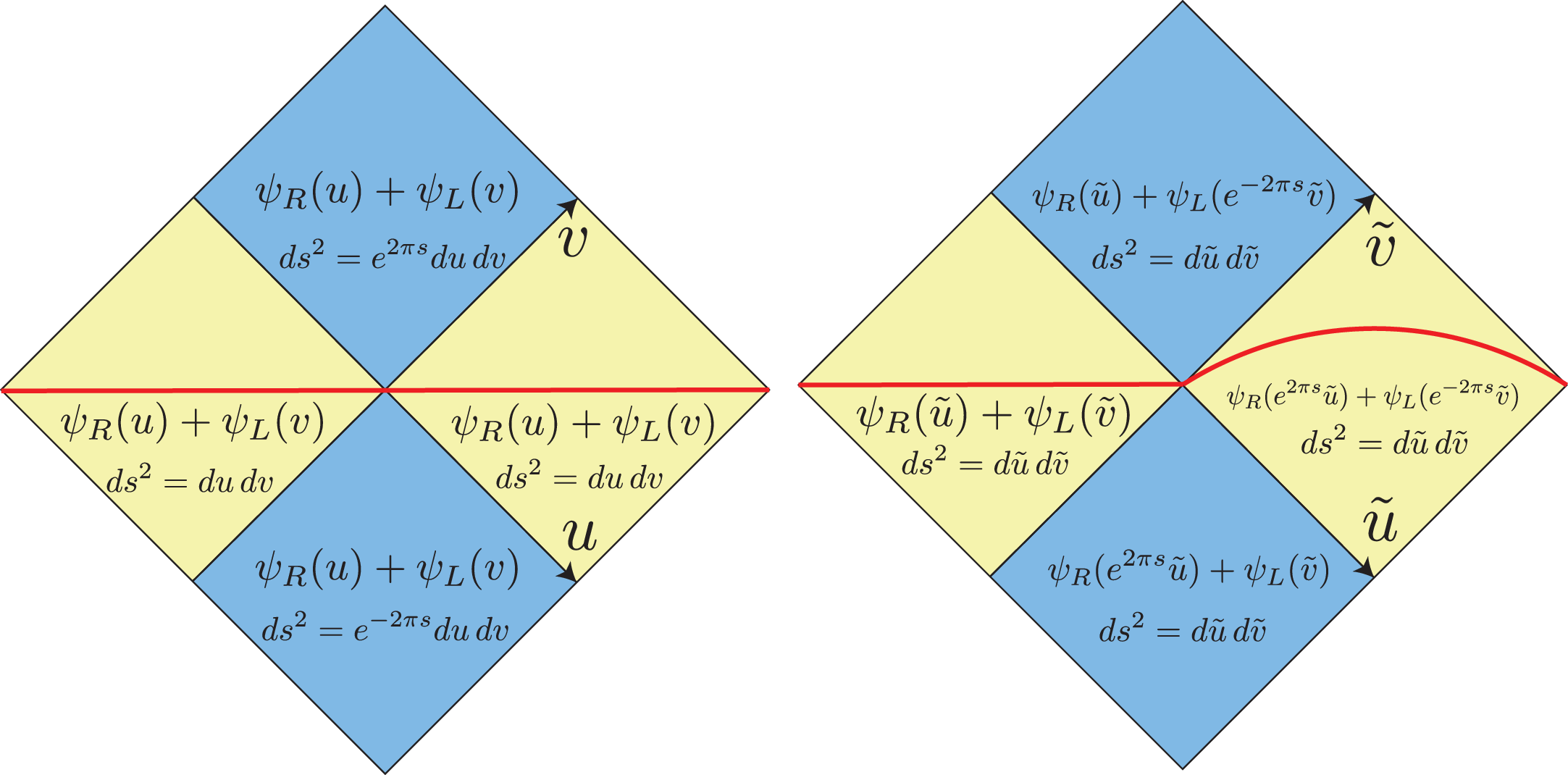}
        \caption{The flowed solution for the massless scalar field $\y$ expressed in terms of coordinates $(u,v)$ (left) and $(\tilde{u},\tilde{v})$ (right). The solution for the dilaton $\p$ also takes the same form. This solution can be obtained by evolving the initial data on the red Cauchy slice, which is obtained by a BCP kink-transformation from the original solution.}
        \label{fig:flowed}
\end{figure}

The advanced solution with  source $-s\sg$  can be constructed analogously using $\w=\pi s \q\(u\)\q\(-v\)$, for which the change is supported purely in the past of $\g$. We can then compute the linearized change in our solution under the flow by taking the difference between the retarded and advanced solutions. With the above choice of coordinates, we find the solution shown in \figref{fig:flowed} (left).  Although we are working at the linearized level, we have nevertheless exponentiated the flow parameter for convenience.

We can also make a coordinate transformation to new coordinates $\(\tilde{u},\tilde{v}\)$ where $\tilde{u}=u$ in the left and future wedges, while $\tilde{u}=u e^{-2\pi s}$ in the past and right wedges. Similarly, we have $\tilde{v}=v$ in the left and past wedges, while $\tilde{v}=v e^{2\pi s}$ in the future and right wedges. In this new choice of coordinates, the metric everywhere takes the form
\begin{equation}
    ds^2=d\tilde{u}\,d\tilde{v},
\end{equation}
while the scalar field and dilaton solutions are as depicted in \figref{fig:flowed} (right).

This choice of coordinates $\(\tilde{u},\tilde{v}\)$ makes it manifest that the flowed solution has a flat metric everywhere, while the matter and dilaton fields have non-analyticities at the Rindler horizons. In particular, the fields are continuous, but their first derivatives have a jump across the horizon. These distributional singularities are of the same flavour as the shocks in the Weyl tensor found in Ref.~\cite{Bousso:2020yxi}.

While we have obtained the full flowed solution in this simple model, it is also useful to describe the change in initial data on a slice passing through the origin. In a general theory, one would do this by deriving the matching conditions across the source in the retarded solution by integrating the equations of motion across the source. This is in fact the approach taken by Peierls in his original work \cite{Peierls:1952cb}.

In this example, since we have the full solution, it is instead easy to read off the transformation by looking at the retarded solution and comparing the data on slice $\Sigma$ just to the past of the origin with the data on slice $\Sigma_s$ just to the future of the origin as shown in \figref{fig:lcone}. We can use the usual Minkowski coordinates defined by $u=x-t$ and $v=x+t$, to define $\Sigma$ as $t=0^-$ and $\Sigma_s$ as $t=0^+$.

From \figref{fig:lcone}, it is straightforward to see that both the matter field and dilaton initial data are completely unchanged. On the other hand, while the induced metric  on $\Sigma_s$ is unchanged, the extrinsic curvature becomes
\begin{align}\label{eq:extchange}
    K_{xx} = K_{uu}+K_{vv}= \frac{\Gamma^v_{vv}-\Gamma^u_{uu}}{2}= 2\pi s\d(x) = 2\pi s \d_\S(\g),
\end{align}
where we have used \Eqref{eq:christ}, and $\d_\S(\g)$ is the delta function defined below \Eqref{eq:kink}, with domain $\S$ and support $\g$. 
The resulting transformation is thus precisely the BCP kink transformation \Eqref{eq:kink}. 
Although we chose to begin with a flat slice $\Sigma$, an analogous analysis for general $\Sigma$ of course continues to agree with \Eqref{eq:kink}.
As shown in \figref{fig:flowed} (right), this result can also be derived by studying $\Sigma$ in terms of the natural light-cone coordinates $\tilde u, \tilde v$ of the transformed metric.

While the Peierls formalism is naturally phrased in terms of linearized solutions, in this case we can integrate the flow to obtain the full solution at finite $s$. The result is identical to \figref{fig:flowed} with $s$ now treated as a finite parameter. 

\section{Examples: JT Gravity with Higher Derivative Interactions}
\label{sec:hdev}
We will now use the above formalism to analyze the flow in theories with higher-derivative interactions. The theories we will consider for illustrative purposes are JT gravity coupled to matter fields with higher-derivative interactions. These theories were discussed in Appendix~A of Ref.~\cite{Dong:2017xht}, where their geometric entropy was also computed using Euclidean methods.

\subsection{Theory 1}
\label{sub:t1}

The first model we consider has an action given by
\begin{equation}\label{eq:action1}
	I_0 = \frac{1}{2} \int d^2 x \sqrt{-g}\[\phi R-(\na \y)^2 +\l \na_\m \na_\n \y \na^\m \na^\n \y \],
\end{equation}
where $\l$ is treated as a perturbatively small coupling. This means that all fields will be expanded in a formal power series in $\l$ and the equations of motion will be solved order by order in $\l$. This ensures that we are not introducing new degrees of freedom into the theory and avoid issues such as the Ostrogradsky instability \cite{Ostrogradsky:1850fid}. In this model, all our results will be $\l$-exact and should be interpreted as a perturbation series to all orders in $\l$.

As before, we can first analyze the general solution to the equations of motion, the simplest being $R=0$. This gives us Minkowski space which we again chart using lightcone coordinates $(u,v)$ such that $ds^2=du\,dv$. The equation of motion for $\psi$ is given by
\begin{equation}\label{eq:psi_lor}
\na_\m \na^\m \psi +\l \na_\m \na^\m \na_\n \na^\n \psi=0,
\end{equation}
where we have commuted covariant derivatives using the fact that the metric is flat.
Treating $\l$ perturbatively, we first look at the leading order solution which is given by
\begin{equation}\label{eq:psi_sol}
    \psi(u,v) = \psi_R(u) + \psi_L(v).
\end{equation}
It is easy to then check that the solution preserves the form of \Eqref{eq:psi_sol} to all orders in $\l$.

Finally, by varying the metric, we obtain the other equations of motion \cite{Dong:2017xht}:
\bm
g_{\m\n} \[-\fr{1}{2} \na^2\p -\fr{1}{4}(\na\y)^2 +\fr{\l}{4} \na_\r\na_\s\y \na^\r\na^\s\y\] +\fr{1}{2} \na_\m\na_\n\p +\fr{1}{2} \na_\m\y \na_\n\y \\
+\fr{\l}{2} \(\na_\m\y \na^2 \na_\n\y +\na_\n\y \na^2 \na_\m\y -\na^2\y \na_\m\na_\n\y -\na_\r\y \na^\r\na_\m\na_\n\y\) = 0.
\em
By using the solution for $\psi$, i.e., \Eqref{eq:psi_sol}, the above equations of motion for the dilaton can then be rewritten in the form
\begin{align}\label{eq:phi_lor_uu}
    \pa_u^2 \tilde{\phi}+\(\pa_u \psi\)^2=0,\\
    \label{eq:phi_lor_vv} \pa_v^2 \tilde{\phi}+\(\pa_v \psi\)^2=0,\\
    \label{eq:phi_lor_uv} \pa_u \pa_v \tilde{\phi}=0,
\end{align}
where we apply the convenient field redefinition $\tilde{\phi} = \phi - \frac{\l}{2} \na_\mu \psi \na^\mu \psi$.
From \Eqref{eq:phi_lor_uv}, it is clear that $\tilde{\phi}$ is also given to all orders in $\l$ by a sum over left and right movers:
\begin{equation}\la{eq:phitsumLR}
    \tilde{\phi}(u,v) = \tilde{\phi}_R(u) + \tilde{\phi}_L(v).
\end{equation}

From the Euclidean analysis, the geometric entropy is given to all orders in $\l$ by \cite{Dong:2017xht}
\begin{equation}
    \sg = 2\pi \phi - \pi \l \(\nabla \y\)^2 = 2\pi \tilde{\p},
\end{equation}
evaluated at its extremal point. We will now compute the flow generated by $\s$. In particular, we will  focus on the Peierls bracket of $\sg$ with initial data on a Cauchy slice passing through the HRT surface $\g$.

The first step is to work out the equations of motion for the modified action $I_s=I_0-s \sg$. From varying the dilaton, we have
\begin{equation}
    R= 4\pi s \d_\g,
\end{equation}
where we have again used the extremality of $\g$ to ignore terms coming from varying the location of the surface. As before, we note that in two dimensions the Ricci scalar completely determines the geometry to be the flat metric with a Lorentzian conical defect at $\g$ (which we place at the origin). Again, we work in conformal gauge:
\be
ds^2 = e^{2\w} du dv,
\ee
with $\w = \pi s\, \q\(-u\)\q\(v\)$ for the retarded solution. Moreover, the Ricci tensor is given by
\begin{equation}\label{eq:ric}
    R_{\a\b} = 2 \pi s \,g_{\a\b} \,\d_\g.
\end{equation}

Varying the matter field $\y$, we obtain the following modified equation of motion (EOM):
\begin{equation}\label{eq:psiret}
    \na_\m \na^\m \y +\l \na_\n \na_\m \na^\m\na^\n \y =  2\pi \l s \na_\m\(\na^\m \y \d_\g\).
\end{equation}
Focusing on the second term, we can use the definition of the Riemann tensor as commutator of covariant derivatives, i.e.,
\begin{equation}\label{eq:riemm}
    \na_\r \na_\m\na_\n \y = \na_\m\na_\n \na_\r \y +R_{\r\m\n\s} \na^\s \y,
\end{equation}
to obtain the identity
\begin{equation}\label{eq:swap}
    \[\na_\m \na^\m,\na_\n\] \y = R_{\a \n}\na^\a \y.
\end{equation}
Using this, we can swap the order of derivatives to obtain 
\begin{equation}\label{eq:comm}
    \na_\n \na_\m \na^\n\na^\m \y= \na_\n\na^\n \na_\m\na^\m \y  +\na^{\a}\( R_{\a\b} \na^\b \y\).
\end{equation}

Using \Eqref{eq:ric}, we find that the last term in \Eqref{eq:comm} precisely cancels the source on the right-hand-side (RHS) of \Eqref{eq:psiret}.
Thus, the EOM simplifies to the form of \Eqref{eq:psi_lor} and we find the same solution \Eqref{eq:psi_sol} satisfying the EOM without any additional singularities.
In summary, the retarded solution for $\psi$ is once again precisely as described in \figref{fig:lcone}.

We then consider the EOMs arising from varying the metric. Using \Eqref{eq:swap} and the $\y$ EOM \Eqref{eq:psi_lor}, we simplify the EOMs to
\bm\label{eq:geom}
g_{\m\n} \[-\fr{1}{2} \na^2\p -\fr{1}{4}(\na\y)^2 +\fr{\l}{4} \na_\r\na_\s\y \na^\r\na^\s\y\] +\fr{1}{2} \na_\m\na_\n\p +\fr{1}{2} \na_\m\y \na_\n\y \\
+\fr{\l}{2} \(R_{\n\a} \na_\m\y \na^\a\y +R_{\m\a} \na_\n\y \na^\a\y -\na_\r\y \na^\r\na_\m\na_\n\y\) = \pi \l s \d_\g \na_\m\y \na_\n\y.
\em
We will now need to work in components and use the Christoffel symbols \Eqref{eq:christ}.

After using \Eqref{eq:riemm}, the $(uu)$-component of \Eqref{eq:geom} simplifies to
\be
\fr{1}{2} \na_u \na_u\p +\fr{1}{2} (\na_u\y)^2 +\fr{\l}{2} \[2R_{uv} \na_u\y \na^v\y -\na^u\y \na_u \na_u \na_u\y -R_{vuuv} (\na^v\y)^2\]
=\pi \l s \d_\g (\na_u\y)^2.
\ee
In the current coordinates, the non-zero components of the Ricci and Riemann tensor are
\be
R_{uv} = \fr{1}{4} e^{2\w} R,\qqu
R_{uvvu} = R_{vuuv} = \fr{1}{8} e^{4\w} R.
\ee
Further using $R=4\pi s \d_\g$, we find
\be
\na_u \na_u\p +(\na_u\y)^2 -\l \na^u\y \na_u \na_u \na_u\y =0.
\ee
Recalling our convenient field redefinition:
\be
\td\p = \p - \fr{\l}{2} \na_\a \y \na^\a \y = \p - \l \na_u \y \na^u \y,
\ee
we find that the $uu$ component simplifies to
\be\label{uueom}
\na_u \na_u\td\p +(\na_u\y)^2 =0.
\ee
In direct analogy with our treatment of  \Eqref{eq:phiuu}, we can now use the extremality condition and \Eqref{eq:phitsumLR}  to reduce \Eqref{uueom} to the unsourced EOM \Eqref{eq:phi_lor_uu}, i.e., to
\begin{equation}\la{uueom2}
    \pa_u^2 \tilde{\phi}+\(\pa_u \psi\)^2=0.
\end{equation}
By symmetry, it is clear that this can be done for the $(vv)$-component as well.

\be
\pa_u \pa_v \td\p = 0,
\ee
Furthermore, using $R=4\pi s \d_\g$ the $(uv)$-component of \Eqref{eq:geom} becomes
\be
\pa_u \pa_v \td\p = 0,
\ee
which agrees with the corresponding unsourced EOM \Eqref{eq:phi_lor_uv}.
Since all three components  agree with their unsourced counterparts, it is clear that the retarded solution for $\td\p$ is the same as the original solution:
\be
\td\p = \td\p_R(u) + \td\p_L(v),
\ee
where we note again that this holds to all orders in $\l$. Thus, the full retarded solution written in terms of $\w$, $\y$, and $\td\p$ again takes the form shown in \figref{fig:lcone}; i.e., the retarded solution in conformal gauge is
\be\label{eq:retnew}
\w = \pi s \q(-u) \q(v), \qquad \y = \y\big|_{s=0}, \qquad \td\p = \td\p\big|_{s=0}.
\ee
This means that the dilaton $\p$ is given by
\ba\label{eq:retphi}
\p &= \td\p+ 2\l e^{-2\w} \pa_u \y \pa_v \y \\
&= \p\big|_{s=0} -4\pi \l s \q(-u) \q(v) \pa_u \y \pa_v \y.
\ea

As in \secref{sec:ein}, the advanced solution is  analogous to the retarded solution with the understanding that both the conformal factor and the change in $\phi$ are now supported in the {\it past} wedge.  
The desired linearized solution can then be obtained by subtracting the two solutions. In terms of the fields $\y,\td\p$, it precisely takes the form shown in \figref{fig:flowed}. From the flowed solution, it is straightforward to read off the Peierls bracket of $\s$ with any gauge-invariant observable.

In particular, we can read off the transformation of the initial data on a Cauchy slice passing through $\g$. As in \figref{fig:lcone}, we choose $\S$ at $t=0^-$ and $\S_s$ at $t=0^+$ and compare them in the retarded solution. It is clear from \Eqref{eq:retnew} that there is no change to $\y,\dot\y$. Also, since the geometry is identical to that of \figref{fig:lcone}, the change in the extrinsic curvature is again given by the BCP kink transformation \Eqref{eq:extchange}. Finally, for the dilaton, from \Eqref{eq:retphi} we find the following changes to the initial data:
\ba
\d\p &= -4\pi \l s \q(-x+0^+) \q(x+0^+) \pa_u \y \pa_v \y \big|_{t=0^+} = 0,\\
\d\dot\p &= -4\pi \l s \(\d(-x+0^+) \q(x+0^+) + \q(-x+0^+) \d(x+0^+)\) \pa_u \y \pa_v \y \big|_{t=0^+}\\
&= -2\pi \l s \d(x) \(\y'^2- \dot\y^2\) \Big|_{t=0^+},\label{eq:phi_ch}
\ea
where we remind the reader that $x$ measures proper distance along the Cauchy slice. The result \Eqref{eq:phi_ch} shows that the transformation generated by the geometric entropy in this higher-derivative theory is not just the BCP kink-transformation.  Indeed, the fact that $\phi$ and $\tilde \phi$ differ immediately implies that the BCP kink-transformation by itself fails to preserve the constraint equation \Eqref{uueom2} (and similarly for the $vv$ constraint).

To summarize, the geometric entropy flow in this theory is a generalization of the BCP kink-transformation given in covariant form by
\begin{align}
    \d K_{\perp\perp} &= 2\pi s \,\delta_{\Sigma}\(\gamma\), \\
    \d \dot\p &= -2\pi \l s \(\na_\mu \y \na^\m \y\)\delta_{\Sigma}\(\gamma\), \\
    \d \y &= \d \dot \y = \d h=\d\p=0,
\end{align}
where the notation used is that of \Eqref{eq:kink}.

\subsection{Theory 2}
\label{sub:t2}

The second model we consider has an additional scalar field and a more complicated higher-derivative interaction. Its action is given by
\begin{equation}\label{eq:action2}
	I_0 = \frac{1}{2} \int d^2 x \sqrt{-g}\[\phi R-(\na \y)^2 -(\na \r)^2 +\l \r \na_\m \na_\n \y \na^\m \na^\n \y \],
\end{equation}
where $\l$ is again treated as a perturbatively small coupling. In this model, we will only work to first order in $\l$.

As before, we can first analyze the general solution to the equations of motion. The equation $R=0$ gives us Minkowski space which we again chart using lightcone coordinates $(u,v)$. Each field is then decomposed in a power series in $\l$, e.g., $\y=\y_0+\l \y_1+\dots$. In these coordinates, the leading order equations of motion are
\begin{align}\label{eq:1storder}
	&\pa_u \pa_v \psi_0=0,\qquad 
	\pa_u \pa_v \r_0=0, \qquad \pa_u \pa_v \p_0 =0,\\
	&\pa_u^2 \p_0 + \(\pa_u \psi_0\)^2+\(\pa_u \r_0\)^2=0.\\
	&\pa_v^2 \p_0 + \(\pa_v \psi_0\)^2+\(\pa_v \r_0\)^2=0.
\end{align}
The above equations imply that each of the fields $\r_0,\psi_0,\p_0$ can be decomposed into a sum of left and right movers, e.g.,
\begin{equation}
    \r_0(u,v) = \r_{0R}(u)+\r_{0L}(v).
\end{equation}
We can think of $\r_0,\psi_0$ as free data, whereas $\p_0$ is fixed by solving the constraint equations. 

At first order in $\l$, the equations of motion can be written in a convenient form
\begin{align}
	&\pa_u\pa_v \td \r_1=0,\qquad \pa_u\pa_v \td \y_1=0,\qquad \pa_u\pa_v \td \p_1=0,\\
	&\pa_u^2 \td \p_1 + 2\(\pa_u \td \psi_0 \pa_u \td \psi_1+\pa_u \td \r_0 \pa_u \td \r_1\)=0\label{eq:phi1uu},\\
	&\pa_v^2 \td \p_1 + 2\(\pa_v \td \psi_0 \pa_v \td \psi_1+\pa_v \td \r_0 \pa_v \td \r_1\)=0,
\end{align}
where we have defined the fields
\begin{align}
    \td \r &= \r + \frac{\l}{4}\na_\m\y \na^\m \y = \r +\l \pa_u \y \pa_v\y,\\
    \td \y &= \y + \frac{\l}{2}\na_\m\y \na^\m \r = \y +\l \(\pa_u \y \pa_v\r+\pa_u \r \pa_v\y\),\\
    \td \p &= \p - \frac{\l}{2}\r\na_\m\y \na^\m \y = \p -2 \l \r \pa_u \y \pa_v\y,
\end{align}
in terms of which the equations take the simple form of JT gravity with minimally coupled matter fields to $O(\l)$. By the equations of motion, all of these fields take the form of a sum over left and right movers, e.g.,
\begin{align}\label{eq:w1}
	&\tilde{\r}_1=\tilde{\r}_{1R}(u)+\tilde{\r}_{1L}(v).
\end{align}

Having discussed the general solutions to the EOMs at $O(\l)$, we now move on to discussing the geometric entropy flow. The geometric entropy at $O(\l)$ is given by \cite{Dong:2017xht}
\begin{equation}
    \sg = 2\pi \phi - \pi \l \r \(\nabla \y\)^2 = 2\pi \tilde{\p}
\end{equation}
evaluated at its extremal point. The first step is to work out the EOMs for the modified action $I_s=I_0-s \sg$. From varying the dilaton, we have
\begin{equation}
    R= 4\pi s \d_\g,
\end{equation}
where we have again used extremality of $\g$. The geometry is as before and is given in conformal gauge by
\be
ds^2 = e^{2\w} du dv,
\ee
with $\w = \pi s\, \q\(-u\)\q\(v\)$ for the retarded solution.

At $O(\l^0)$, there are no source terms in any of the remaining EOMs and we end up with the same solution as in \figref{fig:lcone} for $\r,\y,\p$.
At $O(\l)$, the EOM for $\y$ is
\be
\na_\m \na^\m\y + \l \na_\m \na_\n (\r\na^\m \na^\n \y) = 2 \pi\l s \na_\m \(\d_\g \r \na^\m \y\).
\ee
Using the zeroth order solution and the geometry of the cone, we find that the above equation reduces to
\be
\pa_u \pa_v \td\y = 0.
\ee
Thus, the solution for $\td \y$ is given by \figref{fig:lcone}.

The EOM for $\r$ is
\be
\na_\m \na^\m\r + \fr{\l}{2} \na_\m \na_\n\y \na^\m \na^\n \y = -\pi\l s \d_\g \na_\m\y \na^\m \y.
\ee
Again, we find that this reduces to
\be
\pa_u \pa_v \td\r =0.
\ee
Thus, the solution for $\td \r$ is also given by \figref{fig:lcone}.

Finally, the EOMs obtained from varying the metric $g$ are
\bm
g_{\m\n} \[-\fr{1}{2} \na^2\p -\fr{1}{4}(\na\y)^2 -\fr{1}{4}(\na\r)^2 +\fr{\l}{4} \r \na_\a\na_\b\y \na^\a\na^\b\y\]
+\fr{1}{2} \na_\m\na_\n\p +\fr{1}{2} \na_\m\y \na_\n\y +\fr{1}{2} \na_\m\r \na_\n\r \\
+\fr{\l}{2} \Big\{\big[ \na_\m\y \na^\a (\r \na_\a \na_\n\y) +(\m\lra\n)\big] -\na^\a (\r \na_\a\y \na_\m \na_\n\y) \Big\} = \pi \l s \d_\g \r \na_\m\y \na_\n\y.
\em
The $(uv)$-component simplifies to
\be
\pa_u \pa_v \td\p = 0.
\ee
Thus, the solution for $\td \p$ is also given by \figref{fig:lcone}.

We can also verify that the $(uu)$-component, which is a constraint equation, is also satisfied. It reduces to
\be
\fr{1}{2} \pa_u \pa_u\td\p - \pa_u\w \pa_u\td\p +\fr{1}{2} (\pa_u\td\y)^2 +\fr{1}{2} (\pa_u\td\r)^2 =0.
\ee
This equation is identical to the unsourced EOM, except for the potential $\d$ function contained in $\pa_u \w$. However, the extremality condition (combined with $\td\p$ being a sum over left and right movers) ensures that this contribution vanishes and thus, the $(uu)$-component is satisfied.

Therefore, to summarize, the retarded solution is given by
\ba
\w &= \pi s\q(-u) \q(v),\qquad \td\y = \td\y\big|_{s=0},\qquad \td\r = \td\r\big|_{s=0},\qquad \td\p = \td\p\big|_{s=0}.
\ea
Expressed in terms of the original fields, the solution is
\ba\label{eq:psiret2}
\y &= \y\big|_{s=0} +2\pi \l s \q(-u) \q(v) \[\pa_u \r \pa_v \y + (u\lra v)\], \\
\r &= \r\big|_{s=0} +2\pi \l s \q(-u) \q(v) \pa_u \y \pa_v \y,\label{eq:rhoret2}\\
\p &= \p\big|_{s=0} -4\pi \l s \q(-u) \q(v) \r \pa_u \y \pa_v \y.\label{eq:phiret2}
\ea

Again, the advanced solution is identical to the retarded solution except that $\w=\pi s$ in the past wedge only. The flowed solution then takes the form of \figref{fig:flowed} in terms of the fields $\td \r,\td \y,\td\p$. From the flowed solution, it is straightforward to read off the Peierls bracket of $\s$ with any gauge-invariant observable.

In particular, we can read off the transformation of the initial data on a Cauchy slice passing through $\g$. As in \figref{fig:lcone}, we choose $\S$ at $t=0^-$ and $\S_s$ at $t=0^+$ and compare them in the retarded solution. Since the geometry is identical to that of \figref{fig:lcone}, the change in the extrinsic curvature is again given by the BCP kink transformation, \Eqref{eq:extchange}. For the other fields, we can use \Eqref{eq:psiret2}, \Eqref{eq:rhoret2} and \Eqref{eq:phiret2}.
In summary, we have
\ba
\d h&= \d\r = \d\y = \d\p = 0,\\
\d K_{\perp\perp}&= 2\pi s \,\delta_{\Sigma}\(\gamma\),\\
\d\dot\r &= \pi \l s \d(x) \(\y'^2- \dot\y^2\) \Big|_{t=0^+}=\pi \l s \na_\m \y\na^\m\y\, \d_\S(\g),\\
\d\dot\y &= 2\pi \l s \d(x) \(\r'\y'- \dot\r \dot \y\) \Big|_{t=0^+}=2\pi \l s \na_\m \r\na^\m\y\, \d_\S(\g),\\
\d\dot\p &= -2\pi \l s \d(x) \r \(\y'^2- \dot\y^2\) \Big|_{t=0^+}=-2\pi \l s \r \na_\m \y\na^\m\y\, \d_\S(\g),
\ea
where we remind the reader that $x$ measures proper distance along the Cauchy slice. The above transformation is the geometric entropy flow in this higher-derivative theory, generalizing the BCP kink-transformation. Again, one can check that the BCP kink-transformation on its own would not satisfy the constraint equations.

\section{Discussion}

\label{sec:discussion}

We now comment on several aspects of our results above as well as possible future extensions.

\subsection*{Modular flow}

Here, we note a connection of our results to modular flow in AdS/CFT which will be further elaborated on in Ref.~\cite{xi}. Inspired by Ref.~\cite{Jafferis:2015del}, it was suggested in Ref.~\cite{Bousso:2020yxi} that in an appropriate bulk semiclassical limit the geometric entropy flow is related by holographic duality to modular flow in the dual CFT.\footnote{Closely related comments also appear in Refs.~\cite{Jafferis:2014lza,Jafferis:2015del,Lewkowycz:2018sgn, Chen:2018rgz,Faulkner:2018faa,Bousso:2019dxk}.
In particular, given a boundary state $\ket{\psi}$ which induces a density matrix $\rho_R$ on region $R$, one expects that the geometric entropy flow approximately corresponds to the one-sided modular flow given by
\begin{equation}
	\ket{\psi(s)} = e^{-i K_R s} \ket{\psi},
\end{equation}
where $K_R = -\log \rho_R$. 
In addition, when the boundary conditions admit an appropriate symmetry, a regularized version of modular flow known as the Connes-cocycle (CC) flow was conjectured in Ref.~\cite{Bousso:2020yxi} to generate a kink at the HRT surface as in \Eqref{eq:kink} while also having an additional effect on asymptotic boundary conditions. 
Our result confirms these expectations and generalizes them to higher derivative theories.}

The BCP kink-transformation was in fact originally proposed as a holographic dual to Connes cocycle flow when the vacuum modular hamiltonian on the boundary takes a local form \cite{Bousso:2020yxi}. In Ref.~\cite{xi}, we will demonstrate that the BCP kink-transformation  approximates modular flow more generally when acting on an appropriate class of semiclassical bulk states. The proposal of Ref.~\cite{Bousso:2020yxi} is then a special case of this result.

 Let us also note that our calculations required only extremality and not minimality.
We may thus consider a situation with multiple extremal surfaces $\gamma_1$ and $\gamma_2$, with $\mathcal{A}(\gamma_1)<\mathcal{A}(\gamma_2)$. The true HRT surface in this context is $\gamma_1$, and flowing by the HRT area is dual (at leading order in $G$) to the boundary modular flow when acting on the same state.

However, we could also consider the BCP kink-transformation applied at $\gamma_2$, a non-minimal extremal surface. In the classical phase space, it is clear that this state represents a flow by the non-minimal area operator. We can give this a boundary interpretation by considering a coarse grained state of the form described in Ref.~\cite{Engelhardt:2021mue}. The coarse grained state $\tilde{\rho}_R$ can be prepared by acting with unitaries in the region between $\gamma_1$ and $\gamma_2$ so that the boundary subregion $R$ loses the information of this bulk subregion. It is then clear that the modular Hamiltonian of the coarse-grained state is given by the area operator of $\gamma_2$.

\subsection*{Singularities in the flowed solution}

The BCP kink-transformation in Einstein gravity leads to delta functions in the Weyl tensor along the lightcones of the HRT surface \cite{Bousso:2020yxi}. Here, we have seen that the generalization of the BCP kink-transformation in higher-derivative theories generally contains additional singularities. These singularities raise the potential question of whether the equations of motion are well-defined on the flowed solution.

At linearized order in $s$, if we consider the flow starting from a smooth state, then it is clear that all singularities come with a coefficient of $s$. As a result, one need not worry about, e.g., products of $\d$-functions since they can at most arise at $O(s^2)$. However,  products of $\d$-functions might potentially arise when integrating the flow to finite $s$.  It turns out that this does not occur in computing the equations of motion for the theories studied in \secref{sec:ein} and \secref{sec:hdev}, for which the action evaluated on the flowed solution also remains finite.  But it would be interesting to understand whether this remains true more generally.

In any case, higher derivative couplings should generally be treated as part of a low-energy effective theory where the couplings come with a length scale that determines the ultraviolet cutoff. In such a framework, it is appropriate to first smear all fields over length scales set by $\l$ before inserting them into the equations of motion or action.   Doing so will then yield smooth and finite results.   

\subsection*{Lorentzian derivation of geometric entropy} 
\label{sub:lor_vs_euc}

The geometric entropy $\sg$ is derived as the response of the Euclidean gravitational action to the introduction of a conical singularity}. Thus, it would be of interest to have an alternative derivation of $\sg$ as  the response of the Lorentzian action to the introduction of a Lorentzian conical singularity.  In particular, recall that the qualitative forms of the Euclidean and Lorentzian solutions are quite different. 
The Euclidean solution involves a triple expansion near the conical defect, whereas the Lorentzian solution involves a regular double expansion with additional non-analyticities.
This simplifies the Lorentzian calculation since, unlike the Euclidean calculation, one only needs to keep track of  terms in the variation that are formally of first order.

However,  finding such a Lorentzian derivation appears to be a difficult task for general higher derivative theories since the generalization of the BCP kink-transformation fails to take a simple, universal form in all theories as we discussed before. If one had a method to determine the transformation a priori (without using specific expressions for $\sg$), then it would be  straightforward to derive the generator. Moreover, one can do so in a fashion resembling the Euclidean derivation as demonstrated in Appendix~\ref{app:lorentzian}.

However, we also show in Appendix~\ref{app:extra} that, in our example theories, one can  identify other generators that seem to have similar types of flows, and from among which it does not appear easy to pick out the geometric entropy flow without knowledge of $\sigma$.


\subsection*{Importance of extremality} 
\label{sub:minimality}

Geometric entropies are computed by evaluating certain quantities at their extrema.
The Peierls bracket method is amenable to computing the flow of observables on phase space evaluated at their non-extremal location as well. However, in  our examples, the restriction imposed by the extremality constraint simplified the calculation by removing some source terms A similar simplification was  used in Ref.~\cite{Kaplan:2022orm} to derive the BCP kink-transformation. 

The flow generated by the area of a surface $\gamma_R$ has been much studied when $\gamma_R$ lies on the boundary of the gravitational system \cite{Donnelly:2016auv,Speranza:2017gxd,Chandrasekaran:2019ewn}
In that setting, the area was shown to generate a local boost of the gravitational system about $\gamma_R$.
One might describe such settings as `one-sided,' in the sense that the system lies entirely to one side of $\gamma_R$.  
In contrast, our analysis (and that of Ref.~\cite{Kaplan:2023oaj}) follows Refs.~\cite{Bousso:2020yxi,Kaplan:2022orm} in studying surfaces $\gamma_R$ that do not lie on boundaries, so that the gravitational system is `two-sided,' extending to both sides of $\gamma_R$.  
Nevertheless, as in Refs.~\cite{Bousso:2020yxi,Kaplan:2022orm} we find that similar boosts play an important role due to properties of the HRT surface. 
In particular, extremality of the geometric entropy functional at $\gamma_R$ ensures that the results for the one-sided phase space  continue to hold in the two-sided context.

It would thus also be interesting to analyze the flows generated by functionals evaluated at non-extremal surfaces.
A particular candidate  is the area functional evaluated at the quantum extremal surface of an evaporating black hole.

\subsection*{Implications for obtaining $S_{gen}$ from type II von Neumann algebras}

The argument of Refs.~\cite{Chandrasekaran:2022eqq,Kudler-Flam:2023qfl} for the type II nature of the algebra of local fields outside a black hole can be made directly from the gravitational constraints, whose association with bulk diffeomorphisms remains the same in general higher derivative theories.  But,  in the context of black holes with Killing horizons and asymptotic boundaries, its computation showing that the associated type II von Neumann entropy agrees with the generalized entropy $S_{gen}$ requires relating the (linearized) ADM Hamiltonian to the bulk one-sided Killing charge (the integral of $T_{ab}\xi^a n^b$ together with appropriate graviton contributions) and the geometric entropy $\sigma$.  This relation turns out to follow immediately from the fact that the bulk Killing charge generates one-sided asymptotic time translations up to a BCP kink transformation.  In particular, as argued in \secref{sec:intro}, in this context the flow generated by the geometric $\sigma$ then cancels the BCP kink-transformation, leaving only the desired asymptotic time-translation (perhaps also with a Yang-Mills charge-rotation as described in footnote~\ref{foot:charge}).  Our explicit results in \secref{sec:hdev} are consistent with this in that the departures of the geometric entropy flow identified in \secref{sec:hdev} from the BCP kink-transformation vanish identically in the presence of the above Killing symmetry.

However, as outlined in Section 4 of Ref.~\cite{Chen:2024rpx}, there should again be a type II algebra of local fields even in the absence of such a Killing symmetry.\footnote{In fact, the special case of asymptotic Killing horizons in the far past has already been understood in Ref.~\cite{Kudler-Flam:2024psh}.}   In contexts with an asymptotically AdS boundary, one expects any connection with $S_{gen}$ to proceed via the higher derivative generalization $K_{bndy}=\sigma + K_{\bulk}$ of the JLMS relation of Ref.~\cite{Jafferis:2015del}. 
In contexts where the contribution of $K_{\bulk}$ to boundary modular flow is small,\footnote{See discussion in Ref.~\cite{xi}.} our results thus predict that boundary modular flow will induce non-geometric singularities of the form found in our study of geometric entropy flows in \secref{sec:hdev}, in addition to the Weyl tensor shocks found in Einstein gravity \cite{Bousso:2020yxi}. 




\acknowledgments

We would like to thank Venkatesa Chandrasekaran, Tom Faulkner, Molly Kaplan, Jonah Kudler-Flam, Geoff Penington and Arvin Shahbazi-Moghaddam for useful discussions. 
We would also like to thank Chih-Hung Wu and Jiuci Xu for comments on a draft of this paper.
PR is supported in part by a grant from the Simons Foundation, by funds from UCSB, the Berkeley Center for Theoretical Physics; by the Department of Energy, Office of Science, Office of High Energy Physics under QuantISED Award DE-SC0019380, under contract DE-AC02-05CH11231 and by the National Science Foundation under Award Number 2112880. 
This material is based upon work supported by the Air Force Office of Scientific Research under award number FA9550-19-1-0360.

\appendix

\section{Covariant Phase Space Formalism}\label{app:covphase}

Here we provide a brief review of the covariant phase space formalism. For more details, we refer the reader to the recent review in Ref.~\cite{Harlow:2019yfa}.   This review largely serves to establish notation that we will use in later appendices. 

We begin by considering a local action
\be
I = \int_M L +\int_{\pa M} \ell
\ee
defined on a space of field configurations $\cC$.
Under a general variation $\dc\p^a$ of the dynamical  fields $\p^a(x)$, we have
\be
\dc I = \int_M \dc L +\int_{\pa M} \dc\ell.
\ee
We will also use $\dc$ to denote the exterior derivative on the configuration space $\cC$.\footnote{Note that this was denoted $\d$ in Ref.~\cite{Harlow:2019yfa}.} Note that
\be
\dc L = E_a \dc\p^a + d\q,
\ee
where $\q$ is the pre-symplectic potential current\footnote{This was denoted $\Q$ in Ref.~\cite{Harlow:2019yfa}.}; it is a 1-form on $\cC$ and a local $(d-1)$-form on spacetime, defined up to $dY$ for some local $(d-2)$-form $Y$. Thus
\be
\dc I = \int_M E_a \dc\p^a +\int_{\pa M} (\q + \dc\ell).
\ee

We now write
\be
\pa M = \G \cup \S_- \cup \S_+,
\ee
partitioning the full boundary $\pa M$ into 
the union of a spatial boundary $\G$ and the past/future boundaries $\S_-$, $\S_+$. Note that they meet at $\pa \G = \pa\S_- \cup \pa\S_+$. The variational principle requires
\be
(\q + \dc\ell)|_\G = dC
\ee
for some local $(d-2)$-form $C$ on $\G$. We now extend $C$ to all of $M$, so the RHS above should be replaced by $dC|_\G$. Note that $C$ is defined up to $dX$ for some local $(d-3)$-form $X$. Moreover, if we shift $\q$ by $dY$, we shift $C$ by $Y$. Thus
\ba
\dc I &= \int_M E_a \dc\p^a +\int_{\S_+ - \S_-} (\q + \dc\ell) + \int_{\G} dC\\
&= \int_M E_a \dc\p^a +\int_{\S_+ - \S_-} (\q + \dc\ell -dC).
\ea

Finally, define the pre-symplectic current
\be
\w := \dc (\q -dC) \, \big|_\tP = \dtP \[(\q -dC) |_\tP\],
\ee
where $|_\tP$ denotes the pullback to the pre-phase space $\tP$ defined by the equations of motion (i.e., restricting the base manifold to the submanifold $\tP$ of $\cC$ on which the equations of motion hold, and restricting the action of differential forms to tangent vectors on $\tP$), $\dtP$ is the exterior derivative on $\tP$, and we used that pullbacks commute with exterior derivatives. Note that $\w$ vanishes on $\G$:
\be
\w|_\G = \dc (\q + \dc\ell -dC) \, \big|_{\tP,\G} = 0.
\ee
 It is closed as a 2-form on $\tP$:
\be
\dtP \w = \dtP^2 \[(\q -dC) |_\tP\] = 0,
\ee
and it is also closed as a $(d-1)$-form on spacetime:
\be
d \w = \dc d\q \, \big|_\tP = \dc (\dc L - E_a \dc\p^a) \, \big|_\tP = -\dc E_a \wg \dc\p^a \, \big|_\tP = 0,
\ee
since $E_a = 0$ on $\tP$. We define the pre-symplectic form as
\be
\tW := \int_\S \w = \dc \lt.\(\int_\S \q -dC \) \,\rt|_\tP = \dc \lt.\(\int_\S \q - \int_{\pa\S} C\) \,\rt|_\tP,
\ee
where $\S$ is any Cauchy slice of $M$. Clearly, $\tW$ does not depend on the choice of $\S$, because $d \w=0$ and $\w|_\G = 0$. Moreover, $\tW$ is closed as a 2-form on $\tP$: $\dtP\tW =0$.

In general, $\tW$ may be written as $\dtP \tQ$; such a $\tQ$ is called a pre-symplectic potential. In particular, our $\tW$ can be written in terms of a pre-symplectic potential
\be
\tQ_\S := \lt.\(\int_\S \q -dC \) \rt|_\tP,
\ee
which may depend on $\S$.

The phase space $\cP$ is then defined as
\be
\cP := \tP / \wtd G,
\ee
where $\wtd G$ is the group generated by zero modes of $\wtd\W$. This quotient naturally defines a symplectic form $\W$ from $\wtd\W$ and a symplectic potential $\Q$ from a $\tQ$, satisfying $\W = \dP \Q$ where $\dP$ is the exterior derivative on $\cP$.

We can view $\W$ as a linear map sending vectors to 1-forms:
\be
\W(Y)(X) = \W(X,Y).
\ee
Since $\W$ is non-degenerate, this map has an inverse $\W^{-1}$ sending a 1-form $\s$ to a vector $\W^{-1}(\s)$. Thus we can also view $\W^{-1}$ as an anti-symmetric 2-vector that acts on a pair of 1-forms as $\W^{-1}(\s,\s') = \s(\W^{-1}(\s'))$.

An observable $g$ is a (real-valued) function on $\cP$, or equivalently a $\wtd G$-invariant function on $\tP$. It defines a vector field $X_g$ on $\cP$:
\be
X_g := \W^{-1}(\dP g),
\ee
whose inverse reads
\be\label{eq:gen}
\dP g = \W(X_g) = - X_g \cdot \W,
\ee
where $\cdot$ denotes the interior product inserting a vector into the first argument of a differential form ($X_g \cdot \W$ is sometimes written $X_g \lrcorner \W$ or $\iota_{X_g} \W$). Moreover, $X_g$ is known as a symplectic (or Hamiltonian) vector field.  It generates a symplectomorphism since
\be
\cL_{X_g} \W = X_g \cdot \dP \W + \dP (X_g \cdot \W) = \dP (-\dP g) =0,
\ee
where the first equality comes from Cartan's magic formula.

The change in another observable $f$ under an infinitesimal flow along $X_g$ is then given by the Poisson bracket defined as:
\be
\{f,g\} := \cL_{X_g} f = \dP f(X_g) = \W^{-1} (\dP f,\dP g) = \W(X_g, X_f).
\ee

\section{Poisson/Dirac Bracket Calculation}
\label{app:poisson}

One method to compute Poisson brackets in theories with perturbative higher-derivative terms  is to simply invert the symplectic form perturabtively. An alternative method which we illustrate here relies on the Dirac bracket.  We begin by adding extra degrees of freedom as would be appropriate for a higher-derivative theory with finite couplings.  We then work perturbatively and reduce the phase space by imposing constraints. 

As a warmup, we first demonstrate the method in a toy example of a particle with higher derivative term in the Lagrangian. We then follow the same method in our example theories to compute Dirac brackets with $\sigma$.

\subsection{Particle example}
Consider
\be
L = \fr{1}{2} \(\dot x^2 -\w^2 x^2 +\l \ddot x^2\).
\ee
We use the following procedure in the canonical formalism \cite{Ostrogradsky:1850fid,Henneaux:1994lbw,Cheng:2001du,andrzejewski2007canonical}. We first consider a 4-dimensional phase space with two coordinates $x,y$ where
\be\la{ydef}
y := \dot x,
\ee
and the corresponding momenta are
\ba
p_x &= \fr{\d L}{\d \dot x} := \fr{\pa L}{\pa \dot x} - \fr{d}{dt} \fr{\pa L}{\pa \ddot x} = \dot x - \l \dddot x,\\
p_y &= \fr{\pa L}{\pa \ddot x} = \l \ddot x.
\ea
At this point, we can insert the zeroth-order EOM $\ddot x=-\w^2 x$ into the first-order terms, obtaining
\ba
p_x &= \(1+ \l \w^2\) \dot x,\\
p_y &= -\l \w^2 x.
\ea
Here and below we work to $\cO(\l)$.
The first equation allows us to solve for $\dot x$ in terms of $p_x$:
\be
\dot x = \(1-\l \w^2\) p_x.
\ee
The second equation is a primary constraint on the phase space:
\be
c_2 := p_y +\l \w^2 x \approx 0,
\ee
where we follow Dirac in using $\approx$ to mean ``weakly equal'' (i.e., equal on the constraint surface). The definition \er{ydef} of $y$ is interpreted as another constraint:
\be
c_1 := y- \(1-\l \w^2\) p_x \approx 0.
\ee
These constraints are second-class since their (unconstrained) Poisson bracket is (weakly) nonzero:
\be
\{c_1,c_2\} = 1+\l \w^2,
\ee
where the (unconstrained) Poisson bracket is defined on the unconstrained 4-dimensional phase space as $\{x,p_x\}=\{y,p_y\}=1$. We then define the Dirac bracket
\be
\{f,g\}^* := \{f,g\} - \sum_{ij}\{f,c_i\} \(M^{-1}\)_{ij} \{c_j,g\} = \{f,g\} + \sum_{ij}(1-\l \w^2) \{f,c_i\} \e_{ij} \{c_j,g\},
\ee
where $M_{ij}:= \{c_i,c_j\}$ and $\e_{12}=-\e_{21}=1$. For example,
\be
\{x,\dot x\}^*= 1-2\l \w^2.
\ee
We then use the Dirac bracket to define the Poisson structure on the constraint surface.

It is easy to reproduce the Dirac bracket using the symplectic form. The unconstrained symplectic form on the 4-dimensional phase space is
\be
\W = \dP p_x \wg \dP x + \dP p_y \wg \dP y.
\ee
Imposing the constraints, we find the constrained symplectic form
\be
\W^* = \(1+ \l \w^2\) \dP\dot x \wg \dP x -\l \w^2 \dP x \wg \dP \dot x = \(1+ 2\l \w^2\) \dP\dot x \wg \dP x.
\ee
The Dirac bracket is simply its inverse:
\be
\{f,g\}^* = \W^{-1} (\dP f,\dP g).
\ee
In particular,
\be
\{x,\dot x\}^* = 1-2\l \w^2,
\ee
fully agreeing with the previous result.

\subsection{Theory 1}
We will now determine the transformation of initial data under $\s$ for Theory 1 from \secref{sub:t1} by calculating Dirac brackets. We work with the action
\be
I = \frac{1}{2} \int d^2 x \sqrt{-g}\[\phi R-(\na \y)^2 +\l \na_\m \na_\n \y \na^\m \na^\n \y\]
\ee
in Gaussian normal coordinates 
\begin{equation}
\label{eq:GNC}    
ds^2=-dt^2+h dx^2.\end{equation}
The Lagrangian is
\be
L = \fr{1}{2} \int dx \(\sqrt{-g}\[\phi R-(\na \y)^2 +\l \na_\m \na_\n \y \na^\m \na^\n \y\] -\pa_t\[\fr{\p \dot h}{\sqrt{h}}\]\),
\ee
where we have added the last term (which is a total derivative in $t$) to remove $\ddot h$. We work to $\cO(\l)$ unless otherwise stated.

First, the unconstrained phase space includes coordinates $h,\p,\y$, and
\be\la{xdef}
\x := \dot \y,
\ee
as well as their momenta
\be
\pi_h = \fr{\d L}{\d \dot h},\qqu
\pi_\p = \fr{\d L}{\d \dot \p},\qqu
\pi_\y = \fr{\d L}{\d \dot \y} - \pa_t \fr{\d L}{\d \ddot \y},\qqu
\pi_\x = \fr{\d L}{\d \ddot \y}.
\ee
At $\cO(\l^0)$, the momenta are
\be
\pi_h = -\fr{\dot\p}{2\sqrt{h}} +\cO(\l),\qu
\pi_\p = -\fr{\dot h}{2\sqrt{h}} +\cO(\l),\qu
\pi_\y = \sqrt{h} \dot\y +\cO(\l),\qu
\pi_\x = \cO(\l).
\ee
The  zeroth-order EOMs for $\p$ and $\y$
\ba
\ddot h &= \fr{\dot h^2}{2h},\\
\ddot\y &= \fr{-h\dot h \dot\y +2h \y''- h'\y'}{2h^2}
\ea
can be used to replace fields with two or more $t$-derivatives in the momenta. At $O(\l)$, we find
\ba
\pi_h &= -\frac{\dot{\phi }}{2 \sqrt{h}}+\lambda\frac{ \psi ' \left(\dot{\psi } h'+4 h \dot{\psi }'\right)+h \dot{\psi } \left(\dot{h} \dot{\psi }-2 \psi ''\right)-2 \dot{h} \psi '^2}{4 h^{5/2}},\\
\pi_\p &= -\frac{\dot{h}}{2 \sqrt{h}},\\
\pi_\y &= \sqrt{h} \dot{\psi }+\lambda\frac{ h' \left(3 \dot{h} \psi '-2 h \dot{\psi }'\right)+2 h \left(-\dot{h} \psi ''-\dot{h}' \psi '+2 h \dot{\psi }''\right)}{4 h^{5/2}},\\
\pi_\x &= \l \fr{-h\dot h \dot\y +2h \y''- h'\y'}{2h^{3/2}}.
\ea
We solve the first three equations for the velocities
\ba
\dot h &= -2 \sqrt{h} \pi _{\phi },\\
\dot\p &= -2 \sqrt{h} \pi _h -\lambda\frac{ \pi _{\psi } h' \psi '+2 h \left(\pi _{\psi } \psi ''-2 \psi ' \pi _{\psi }'+\pi _{\phi } \left(\pi _{\psi }^2-2 \psi '^2\right)\right)}{2 h^{5/2}},\\
\dot\y &= \frac{\pi _{\psi }}{\sqrt{h}} +\lambda\frac{ -2 h^2 \left(\pi _{\psi }''+\pi _{\phi } \psi ''+\psi ' \pi _{\phi }'\right)-2 \pi _{\psi } h'^2+h \left(\pi _{\psi } h''+h' \left(3 \pi _{\psi }'+2 \pi _{\phi } \psi '\right)\right)}{2 h^{7/2}},
\ea
whereas the last of the 4 equations becomes
\be
\pi_\x = \l \fr{2h\pi_\p \pi_\y +2h \y''- h'\y'}{2h^{3/2}}.
\ee
This is a primary constraint:
\be
c_2 = \pi_\x - \l \fr{2h\pi_\p \pi_\y +2h \y''- h'\y'}{2h^{3/2}}.
\ee
The definition \er{xdef} of $\x$ is interpreted as another constraint:
\be
c_1 = \x - \dot\y = \x - \fr{\pi_\y}{\sqrt{h}} +\cO(\l).
\ee
The (unconstrained) Poisson bracket is defined as
\be
\{h(t,x),\pi_h(t,y)\} = \{\p(t,x),\pi_\p(t,y)\} = \{\y(t,x),\pi_\y(t,y)\} = \{\x(t,x),\pi_\x(t,y)\} = \d(x-y).
\ee
From this we find
\be
\{c_1(t,x), c_2(t,y)\} = \d(x-y) +\cO(\l).
\ee
So these are second-class constraints, and the Dirac bracket is defined as
\ba\label{eq:dirac1}
\{f,g\}^* &:= \{f,g\} - \int dz dw \sum_{ij}\{f,c_i(t,z)\} \(M^{-1}\)_{ij, zw} \{c_j(t,w),g\} \\
&= \{f,g\} + \int dz \sum_{ij} \{f,c_i(t,z)\} \e_{ij}  \{c_j(t,z),g\} (1+\cO(\l)).
\ea

Now using
\be
\s = 2\pi\p +\pi \l \(\dot\y^2 -\fr{\y'^2}{h}\) \bigg|_{x=t=0} = 2\pi\p +\pi \l \(\fr{\pi_\y^2 -\y'^2}{h}\) \bigg|_{x=t=0},
\ee
we find the (unconstrained) Poisson brackets:
\ba
\{h(0,x), \s\} &= 0,\\
\{\dot h(0,x), \s\} &= 4\pi \sqrt{h} \d(x),\\
\{\p(0,x), \s\} &= 0,\\
\{\dot\p(0,x), \s\} &= -2\pi \l \fr{\y'^2}{h^{3/2}} \d(x),\\
\{\y(0,x), \s\} &= 2\pi \l \fr{\pi_\y}{h} \d(x),\\
\{\dot\y(0,x), \s\} &= 0.
\ea

Using \Eqref{eq:dirac1}, the Dirac brackets with $\s$ are given by
\ba
\{h(0,x), \s\}^* &= 0,\\
\{\dot h(0,x), \s\}^* &= 4\pi \sqrt{h} \d(x),\\
\{\p(0,x), \s\}^* &= 0,\\
\{\dot\y(0,x), \s\}^* &= 0,\\
\{\dot\p(0,x), \s\}^* &=  2\pi \l \(\fr{h\dot\y^2 -\y'^2}{h^{3/2}}\) \d(x),\\
\{\y(0,x), \s\}^* &=0,
\ea
where we have used $\{c_2(0,z),\s\} =2\pi\l \d(z) \pi_\y /\sqrt{h}$.

Now all 6 Dirac brackets agree with our previous results. In particular, they agree with the initial data change $\d\y=0$ and also lead to
\be
\d\dot\p =-2\pi \l s \d(x) \(\fr{\y'^2- h\dot\y^2}{h^{3/2}} \)
\ee
which agrees with the previous result \eqref{eq:phi_ch} after setting $h=1$.

Having completed the calculation, it is important to discuss what we are doing conceptually. The fields for which we have obtained the transformation are not coordinates on the gauge-invariant phase space since there is still residual gauge freedom. For example, in JT gravity, the phase space is just two-dimensional \cite{Harlow:2018tqv}. However, if we work at the level of the pre-phase space, then the residual gauge freedoms are zero modes of the pre-symplectic form $\tW$ defined in Appendix~\ref{app:covphase} so that $\tW$ cannot be inverted to obtain Poisson brackets.

Since the zero-modes correspond to gauge-directions, the Poisson bracket of two gauge-invariant functions is nevertheless well-defined.  One could thus proceed by restricting attention to gauge-invariant functions on the pre-phase space and, indeed, this is effectively the strategy employed in our study of the Peierls bracket.

However, we have followed a different approach in this appendix.   Instead of using the original $\tW$ from Appendix~\ref{app:covphase}, we instead imposed  the Gaussian normal coordinate condition \Eqref{eq:GNC} in the {\it action} and then used the gauge-fixed action to define a new symplectic structure.  In particular,because the gauge-fixed action does not lead to the full equations of motion, the constraint equations ($E_{tt}$ and $E_{tx}$) have not been imposed at this point.  While such a gauge-fixing does not change Poisson brackets of gauge invariant quantities, it removes the zero-modes in $\tW$  associated with {\it all} gauge transformations, even the residual ones.  Indeed, while the residual gauge transformations still remain zero-modes of the gauge-fixed action, they now define {\it global} symmetries of that action.   While the Poisson bracket of two gauge-invariant functions is unchanged by this gauge-fixing, other Poisson brackets now depend on the choice of gauge.

As mentioned above, the fact that residual gauge symmetries do not define zero-modes of the new symplectic structure described in the previous paragraph is related to the fact that variations of the gauge-fixed action fail to yield the full set of equations of motion and, in particular, to the fact that the constraint equations ($E_{tt}$ and $E_{tx}$) no longer follow from the gauge-fixed action.  As a result, the gauge-fixed symplectic structure does not respect the constraints, and Poisson brackets with the constraints do not vanish. Instead, Poisson brackets with the constraints generate residual gauge transformations which preserve the Gaussian normal gauge condition \Eqref{eq:GNC}.  Such gauge-fixings are thus closely related to the well-known approach of Dirac \cite{Dirac}.   See e.g.\ \cite{Marolf:1993af} for a general discussion of this phenomenon.

\subsection{Theory 2}

We now repeat the above analysis for the second example theory  (from \secref{sub:t2}). From the action, we can read off the conjugate momenta in Gaussian normal gauge which are given by
\begin{align}
  \pi_h  =&-\frac{\dot{\phi }}{2 \sqrt{h}}+\frac{\lambda  \r  \left(\psi ' \left(\dot{\psi } h'+4 h \dot{\psi }'\right)+h \dot{\psi } \left(\dot{h} \dot{\psi }-2 \psi ''\right)-2 \dot{h} \left(\psi '\right)^2\right)}{4 h^{5/2}}\\
\pi_\phi=&-\frac{\dot{h}}{2 \sqrt{h}}\\
\begin{split}\pi_\y=&\sqrt{h} \dot{\psi }+\frac{\lambda  \left(2 h \left(\psi ' \left(\dot{\r } h'-2 \dot{h} \r '\right)+h \left(\dot{\r } \left(\dot{h} \dot{\psi }-2 \psi ''\right)+4 \dot{\psi }' \r '\right)\right)\right)}{4 h^{5/2}}\\
&+\frac{\l\r  \left(h' \left(3 \dot{h} \psi '-2 h \dot{\psi }'\right)+2 h \left(-\dot{h} \psi ''+\dot{h}' \left(-\psi '\right)+2 h \dot{\psi }''\right)\right)}{4 h^{5/2}}\end{split}\\
\pi_\r=&\sqrt{h} \dot{\r }
\end{align}
The canonical Poisson brackets e.g.\ $\{\phi(x),\pi_{\phi}(y)\}=\d\(x-y\)$
then follow immediately.
In order to compute the flow generated by $\s$ on the fields of interest, we need to invert the above expressions to solve for the zeroth order momenta, i.e., $\dot{\phi},\dot{h},\dot{\y},\dot{\r}$.
At $O(\l)$, this can be done simply by re-expressing the $O(\l)$ terms on the right-hand side perturbatively in terms of the zeroth order momenta and discarding higher order terms.

Now the constraints to be imposed are
\begin{align}
    c_1&=\xi-\dot{\y}=\xi-\frac{\pi _{\psi }}{\sqrt{h}} +O(\l)\\
c_2&=\pi_\xi-\frac{\lambda  \r  \left(2 h \left(\psi ''+\pi _{\psi } \pi _{\phi }\right)-h' \psi '\right)}{2 h^{3/2}}
\end{align}
The bracket between the constraints is
\begin{equation}
    \{c_1(x),c_2(y)\}= \d(x-y)+O(\l),
\end{equation}
so these are again second-class constraints, and we need to use the Dirac bracket.
Now using
\be
\s = 2\pi\p +\pi \l \(\dot\y^2 -\frac{\y'^2}{h}\) \bigg|_{x=t=0} = 2\pi\p +\pi \l \r \(\frac{\pi_\y^2 -\y'^2}{h}\) \bigg|_{x=t=0},
\ee
we find the following Dirac brackets with $\s$:
\begin{align}
\{h(0,x), \s\}^* &= 0,\\
\{\dot h(0,x), \s\}^* &= 4\pi \sqrt{h} \d(x),\\
\{\p(0,x), \s\}^* &= 0,\\
\{\dot\p(0,x), \s\}^* &= \frac{2 \pi  \lambda   \r \left(h \dot \psi^2-\psi'^2\right)}{h^{3/2}}\delta (x),\\
\{\y(0,x), \s\}^* &=0\\
\{\dot\y(0,x), \s\}^* &= \frac{2 \pi  \lambda   \left(\psi' \r'-h \dot\psi \dot \r\right)}{h^{3/2}}\delta (x),\\
\{\r(0,x), \s\}^* &=0\\
\{\dot\r(0,x), \s\}^* &= \frac{\pi  \lambda  \left(\psi'^2-h \dot \psi^2\right)}{h^{3/2}}\delta (x).
\end{align}
These results agree with those  found earlier using the Peierls bracket method.

\section{Susceptibility and Generators} 
\label{app:lorentzian}

Suppose an oracle has handed us the generalization of the BCP kink-transformation for general higher-derivative theories, and we wish to use this information to construct the geometric entropy. As described in Appendix~\ref{app:covphase}, given a linearized transformation defined by a vector field $X_g$ on phase space, we can obtain the generator straightforwardly by using \Eqref{eq:gen}. 

However, it is also useful to describe an alternate method here that resembles the Euclidean derivation or geometric entropy in Ref.~\cite{Dong:2019piw}. As reviewed in \secref{sub:geometric}, the geometric entropy is computed as a response of the Euclidean action to the insertion of a conical defect. A natural guess is to mimic the same in Lorentzian signature with the retarded/advanced solution playing the role of the spacetime deformed by a conical defect.
Such a response function will be called a susceptibility, and we will discuss its relation to the generator of a given transformation below.

For a general observable $g$, in the process of computing the Peierls bracket, we deform the action to
\be\la{deformed}
I_s = I_0 - s g
\ee
and consider its deformed EOMs. We will specialize to the case of current interest where $g$ is a compactly supported observable on some $\S_0$ (which can be called $t=0$), i.e., $g$ is a spatial integral of $\p^a$ over a compact region $R$ (without boundary) on $\S_0$. To linear order in $s$, any deformed solution $\p^a_s$ can be written as
\be
\p^a_s = \p^a_0 +s h^a,
\ee
where $\p^a_0$ is a solution to the original EOMs. For any $h^a$, the corresponding configuration-space vector is
\be
X^{h} := \int d^dx h^a(x) \fr{\d}{\d \p^a(x)}.
\ee
We will use $h_R$ ($h_A$) to denote the retarded (advanced) $h$. The Hamiltonian vector field $X_g$ of $g$ is then given by
\be
X_g = X^{h_R} - X^{h_A}.
\ee

Recall the definition of the pre-symplectic potential
\be
\tQ_\S := \lt.\(\int_\S \q -dC \) \rt|_\tP.
\ee
Since the flows generated by the individual vector fields $X^{h_R}, X^{h_A}$ do not preserve the pre-phase space $\tP$, it is useful to define
\be
\tQ_{\S}^{\cC} := \int_\S \(\q -dC\),
\ee
which is a 1-form on the configuration space $\cC$ and from which $\tQ_\S$ is obtained by pullback to $\tP$: $\tQ_\S= \tQ_{\S}^{\cC}\big|_\tP$.
For a given $h_R$, we may now define a susceptibility $\td g$ as
\be
\td g := \lt.\[X^{h_R} \cdot \(\tQ_{\S_{0^+}}^{\cC} - \tQ_{\S_{0^-}}^{\cC}\)\] \rt|_{\tP} = \lt.\( X^{h_R} \cdot \tQ_{\S_{0^+}}^{\cC} \)\rt|_{\tP}.
\ee
Now, the following condition is useful in relating the susceptibility to the generator.
\begin{nlemma}
\label{lem1}
If
\be
\label{eq:Lemma1C}
\lt.\(\cL_{X^{h_R}} \tQ_{\S_{0^+}}^{\cC}\) \rt|_{\tP} =0,
\ee
then $\td g=g$ up to an additive constant.
\end{nlemma}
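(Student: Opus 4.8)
The plan is to show that $\td g$ and $g$ have the same exterior derivative on the pre-phase space $\tP$, so that they differ only by an additive constant on each connected component of $\tP$. Since $X^{h_R}\cdot\tQ_{\S_{0^+}}^{\cC}$ is a function on $\cC$ whose restriction to $\tP$ is precisely $\td g$, and since pullback commutes with the exterior derivative, I would begin by applying Cartan's magic formula to the $1$-form $\tQ_{\S_{0^+}}^{\cC}$:
\be
\cL_{X^{h_R}}\tQ_{\S_{0^+}}^{\cC} = X^{h_R}\cdot\dc\tQ_{\S_{0^+}}^{\cC} + \dc\(X^{h_R}\cdot\tQ_{\S_{0^+}}^{\cC}\).
\ee
Pulling back to $\tP$, the last term becomes $\dtP\td g$, while the hypothesis \er{eq:Lemma1C} kills the left-hand side. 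This yields $\dtP\td g = -\left.\(X^{h_R}\cdot\dc\tQ_{\S_{0^+}}^{\cC}\)\right|_{\tP}$, and the remaining task is to evaluate this right-hand side and match it to $\dtP g$.

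Next I would split the retarded deformation as $X^{h_R}=X_g+X^{h_A}$, which follows immediately from $X_g=X^{h_R}-X^{h_A}$. The point of this split is that only the combination $X_g$ is tangent to $\tP$, since $\d_g\p=h_R-h_A$ solves the linearized undeformed equations of motion, whereas $X^{h_R}$ and $X^{h_A}$ individually are not. For the tangent piece I would use that $\tW=\left.\(\dc\tQ_{\S_{0^+}}^{\cC}\)\right|_{\tP}$ (again because pullback commutes with $\dc$, and $\tW$ is slice-independent), so that contracting with $X_g\in T\tP$ gives $\left.\(X_g\cdot\dc\tQ_{\S_{0^+}}^{\cC}\)\right|_{\tP} = X_g\cdot\tW = -\dtP g$ by \er{eq:gen}. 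Assembling the pieces, $\dtP\td g = \dtP g - \left.\(X^{h_A}\cdot\dc\tQ_{\S_{0^+}}^{\cC}\)\right|_{\tP}$, so everything hinges on the advanced contribution.

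The heart of the argument, and what I expect to be the main obstacle, is showing that the advanced piece $\left.\(X^{h_A}\cdot\dc\tQ_{\S_{0^+}}^{\cC}\)\right|_{\tP}$ vanishes even though $X^{h_A}$ is \emph{not} tangent to $\tP$. Here I would invoke the defining support property of the advanced solution: $\p_0+s\,h_A$ coincides with $\p_0$ throughout the future of the source, so $h_A$ vanishes identically on an open neighborhood to the future of the slice $\S_{0^+}$, which sits just to the future of the source region at $t=0$. Since $\dc\tQ_{\S_{0^+}}^{\cC}=\int_{\S_{0^+}}\dc(\q-dC)$ is the integral of a symplectic current built locally and bilinearly out of the field variations and their spacetime derivatives, contracting one slot with $X^{h_A}$ replaces one factor of $\dc\p$ (and its derivatives) by $h_A$ (and its derivatives), all of which vanish pointwise on $\S_{0^+}$. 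The delicate point to verify is that the pullback to $\S_{0^+}$ depends on $h_A$ only through its values and a finite number of its derivatives on $\S_{0^+}$; this is precisely what is guaranteed by $h_A\equiv 0$ in an open future neighborhood, which kills every such local expression and hence makes the integrand vanish irrespective of what occupies the second slot.

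With the advanced term eliminated, I conclude $\dtP\td g = \dtP g$. The same computation evaluated on a zero-mode (gauge) direction $Z$ gives $\tW(X_g,Z)=0=\dtP g(Z)$ while the advanced term still vanishes, so the identity holds as an equality of genuine $1$-forms on all of $\tP$, including gauge directions. Integrating on each connected component then gives $\td g=g$ up to an additive constant, as claimed. As a consistency check that the same mechanism is in play, I would note that the second equality in the definition of $\td g$ arises identically: the retarded solution agrees with $\p_0$ in the past, so $h_R$ vanishes on $\S_{0^-}$ and $X^{h_R}\cdot\tQ_{\S_{0^-}}^{\cC}=0$, mirroring the advanced argument on the opposite slice.
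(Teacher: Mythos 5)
Your proof is correct and follows essentially the same route as the paper's: Cartan's magic formula applied to $\tQ_{\S_{0^+}}^{\cC}$, replacement of $X^{h_R}$ by $X_g$ using the vanishing of the advanced deformation $h_A$ on (a future neighborhood of) $\S_{0^+}$, and then tangency of $X_g$ to $\tP$ together with $X_g\cdot\tW=-\dtP g$ to conclude $\dtP\td g=\dtP g$. The only difference is that you spell out the locality argument for why the advanced contribution drops out, which the paper leaves implicit.
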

\begin{proof}
$\td g=g$ up to an additive constant is equivalent to $\dtP \td g = \dtP g$, which we now prove:
\ba
0 &= \lt.\(\cL_{X^{h_R}} \tQ_{\S_{0^+}}^{\cC}\) \rt|_{\tP} = \lt.\(X^{h_R} \cdot \dc \tQ_{\S_{0^+}}^{\cC}\)  \rt|_{\tP} + \lt.\dc \(X^{h_R} \cdot \tQ_{\S_{0^+}}^{\cC}\) \,\rt|_{\tP}\\
&= \lt.\((X^{h_R}-X^{h_A}) \cdot \dc \tQ_{\S_{0^+}}^{\cC}\)  \rt|_{\tP} + \lt.\dc \((X^{h_R}-X^{h_A}) \cdot \tQ_{\S_{0^+}}^{\cC}\) \,\rt|_{\tP}\\
&= \lt.\(X_g \cdot \dc \tQ_{\S_{0^+}}^{\cC}\) \rt|_{\tP} + \lt.\dc \(X_g \cdot \tQ_{\S_{0^+}}^{\cC}\) \,\rt|_{\tP}\\
&= X_g \cdot \dtP \tQ_{\S_{0^+}} + \dtP \(X_g \cdot \tQ_{\S_{0^+}}\)\\
&= X_g \cdot \tW + \dtP \td g = -\dtP g + \dtP \td g.\la{gtdminusg}
\ea
\end{proof}

Thus, if the above condition \er{eq:Lemma1C} is satisfied, the susceptibility  provides an alternate computation of the generator given a transformation. If, instead of vanishing, $\cL_{X^{h_R}} \tQ_{\S_{0^+}}^\cC$ turns out to be equal to an exact form, then we can add a total derivative to the action to satisfy the condition \er{eq:Lemma1C}. It must be a closed form since the transformation is a symplectomorphism, but there may be topological obstructions to it being an exact form.\footnote{In such situations, there can be symplectomorphisms with no generator. For instance, for a particle living on a circle, the transformation $p\to p+\a$ is a symplectomorphism with no generator. This is because the position $q$ is not a well-defined observable on phase space, only periodic functions of $q$ are.}

For future reference, we note that the above lemma in fact holds point by point in the pre-phase space $\tP$; i.e., if \Eqref{eq:Lemma1C} holds on any solution, then the Hamiltonian vector field of $\tilde g$ agrees with $X_g$ at that solution.

We can now discuss whether the condition of the above lemma is satisfied for the case of interest. While it is satisfied in simple theories like Einstein gravity and JT gravity, it generally fails in higher-derivative theories using the most obvious choice of pre-symplectic potential. We illustrate this in both our example theories. However, in both cases we can obtain a pre-symplectic potential that satisfies the desired condition by adding total derivative terms and, upon doing so, one recovers the correct geometric entropy.

\subsection{Theory 1}
From the action \Eqref{eq:action1}, we find
\bm\label{prepot}
    \wtd\Theta^\cC=\int \sqrt{-g} n_{\m} \lt\{\fr{1}{2}\[\phi\(g^{\m\a}\nabla^\b-g^{\a\b}\na^{\m}\)\Delta g_{\a\b}+\(\nabla^\m \phi g^{\a\b}-\nabla^\a\phi g^{\m\b}\)\Delta g_{\a\b}\]\right.\\
    \left.- \na^{\m} \y\Delta \y +\l\na^{\m}\na^{\n}\y \Delta\(\na_{\n}\y\) -\l\na_{\n}\na^{\n}\na^{\m}\y \Delta\y -\l \(\na^\m \na^\a\y \na^\b\y - \fr{1}{2} \na^\a \na^\b\y \na^\m\y\) \D g_{\a\b}\rt\},
\em
where $n^\m$ is past-pointing. Here we ignored a possible boundary term $\int_{\pa\S} C$ in the definition of $\wtd\Theta^\cC$, but such a boundary term would vanish when we act with $\cL_X$ or the interior product with $X$ later. In Gaussian normal coordinates, $n^\mu=(-1,0)$ and $\wtd\Theta^\cC$ becomes
\bm
    \wtd\Theta^\cC=-\int dx \sqrt{h} \[\(\frac{\dot{\phi}-K\phi}{2h}\)\Delta h -\phi\Delta K-\dot{\y}\Delta\y -\l \ddot{\y} \Delta \dot{\y}+\l\(\frac{\dot{\y}'- K\y'}{h}\)\Delta \y'\rt.\\
\lt. +\l\fr{2h \(2K\y'^2 -K h \dot\y^2 +\y'' \dot\y -2\y'\dot\y'\)-h'\y' \dot\y}{4h^3} \D h \].
\em

Now denote the transformation generated by $\s$ as a vector field $X$ on the pre-phase space. From the results of \secref{sub:t1}, the only non-trivial transformations on $\S_0$ (a time slice passing through the HRT surface) are given by
\begin{align}
    \cL_X K&= 2\pi \d(x)/\sqrt{h}\\
    \cL_X \dot\p &= -2\pi \l \fr{\y'^2- h\dot\y^2}{h^{3/2}} \d(x).
\end{align}
Thus, we find
\ba
\cL_X \wtd\Theta_{\S_0} &= \lt.2\pi \l \[\(\fr{\y'^2- h\dot\y^2}{2h^2} \)\dP h -\dot\y \dP\dot\y + \fr{\y'}{h} \dP \y' -\(\fr{2\y'^2-h \dot \y^2}{2h^2}\) \dP h\]\rt|_{x=0}\\
&= \lt.\pi \l \dP\(\fr{\y'^2}{h} -\dot\y^2\)\rt|_{x=0}.
\ea
As expected, we then obtain
\be
\cL_X \wtd\W = \cL_X \dP \wtd\Theta_{\S_0} = \dP \cL_X \wtd\Theta_{\S_0} = 0,
\ee
confirming that $X$ is a symplectomorphism. Moreover, since $\cL_X \wtd\Theta_{\S_0}$ is an exact form, we can add in a total derivative term to the action to satisfy $\cL_X \wtd\Theta_{\S_0}=0$.

Alternatively, we can find $g$ by using an identity derived in \er{gtdminusg}:
\be
\D(\td g-g) = \lt.\(\cL_{X^{h_R}} \tQ_{\S_{0^+}}^{\cC}\) \rt|_{\tP} = \cL_X \wtd\Theta_{\S_0} = \lt.\pi \l \dP\(\fr{\y'^2}{h} -\dot\y^2\)\rt|_{x=0}.
\ee
We have
\be
\td g= \lt.\( X^{h_R} \cdot \tQ_{\S_{0^+}}^{\cC} \)\rt|_{\tP} = X\cdot \wtd\Theta_{\S_0} = \lt.2\pi \phi\rt|_{x=0},
\ee
and thus,
\be
g= \lt.2\pi \phi - \pi \l \(\fr{\y'^2}{h} -\dot\y^2\)\rt|_{x=0} = \lt.2\pi \(\phi - \fr{\l}{2} \na_\m\y \na^\m\y\)\rt|_{x=0}
\ee
which is precisely the geometric entropy $\s$.


\subsection{Theory 2}
Working again in Gaussian normal coordinates and, moreover, taking $h=1$ and using the residual gauge freedom to simplify equations, we find
\begin{equation}
    \tQ^\cC=-\int \[-\phi\Delta K-\(\dot{\y} +\l\dot{\w}\ddot{\y}-\l\w'\(\dot{\y}'-K\y'\)\)\Delta\y-\dot{\w}\Delta\w -\l \w \ddot{\y} \Delta \dot{\y}+\l\w\(\dot{\y}'-K\y'\)\Delta \y'\].
\end{equation}
Computing the Lie derivative  yields
\begin{equation}
    \mathcal{L}_{X} \tQ = \dP\(X\cdot \tQ_{\S_0}\) +  X\cdot \dP \tQ_{\S_0}.
\end{equation}
Again using the results of \secref{sub:t2}, the only non-vanishing contribution to $X\cdot \tQ_{\S_0}$ comes from the first term in $\tQ$ and takes the form $2\pi \phi$. For the second term $X\cdot \dP \tQ_{\S_0}$, we find $-2\pi\dP\phi$ (which cancels the previous contribution) plus some additional contributions, leading to
\begin{equation}
    \mathcal{L}_X \tQ_{\S_0} = \dP\(\pi\l \w_0 \na_\m\y_0\na^\m\y_0\).
\end{equation}
Thus, the condition \er{eq:Lemma1C} is not satisfied, but the above results show that we can again deal with this by adding in a total derivative term to the action.  Once we do so, the resulting susceptibility agrees with the geometric entropy just as for Theory 1.

\section{More General Generators}
\label{app:extra}

This appendix studies one-parameter families of observables that generalize the geometric entropies of the example theories studied in \secref{sub:t1} and \secref{sub:t2}.  Our goal is to 
compare the flows generated by such observables with the geometric entropy flow of each theory. The comparison shows that it appears difficult to find a simple, universal prescription for distinguishing the geometric entropy flow with other flows.

\subsection{Theory 1}

Consider the one-parameter family of observables
\be
\c = 2\pi \phi - (1+\a) \pi\l \na_\m \psi \na^\m \psi \Big|_\g,
\ee
where $\g$ is defined to be an extremal surface with respect to $\c$. Note that $\a=0$ gives $\c=\s$. Adding $\c$ to the action to calculate the Peierls brackets, we find
\begin{equation}
I_s = \frac{1}{2} \int d^2 x \sqrt{-g}\[\phi R-(\na \y)^2 +\l \na_\m \na_\n \y \na^\m \na^\n \y\]  -s \c.
\end{equation}
To find the retarded solution, we solve the EOMs resulting from varying $I_s=I_0-s\c$ with respect to $g,\p,\y$.

Since the analysis is quite similar to the main text, we will  simply state the result. Up to $\cO(\l)$, the retarded solution in conformal gauge is
\ba
\w &= \pi s \q(-u) \q(v),\\
\y &= \y\big|_{s=0} + 2\a \pi \l s \[\d(u) \q(v) \pa_v\y(0) - \q(-u) \d(v) \pa_u\y(0)\], \\
\td\p &= \td\p\big|_{s=0} + 4\a\pi \l s \q(-u) \q(v) \pa_u\y(0) \pa_v\y(0).
\ea
This means that $\p$ takes the form
\ba
\p &= \td\p+ 2\l e^{-2\w} \pa_u \y \pa_v \y \\
&= \td\p\big|_{s=0} + 4\a\pi \l s \q(-u) \q(v) \pa_u\y(0) \pa_v\y(0) + 2\l e^{-2 \pi s \q(-u) \q(v)} \pa_u \y \pa_v \y\\
&= \p\big|_{s=0} -4\pi \l s \q(-u) \q(v) \pa_u \y \pa_v \y + 4\a\pi \l s \q(-u) \q(v) \pa_u\y(0) \pa_v\y(0).
\ea

In this theory, $\a=0$ is a somewhat special point in that the singularities in $\y$ vanish at $\a=0$.  It is thus of interest to see if this feature persists in our second example theory, to which we now turn.

\subsection{Theory 2}
Consider the action
\begin{equation}
I_s = \frac{1}{2} \int d^2 x \sqrt{-g}\[\phi R-(\na \y)^2 -(\na\r)^2 +\l \r \na_\m \na_\n \y \na^\m \na^\n \y\]  -s \c,
\end{equation}
and the one-parameter family of observables
\be
\c = 2\pi \phi - (1+\a) \pi\l \r \na_\m \psi \na^\m \psi \Big|_\g,
\ee
where $\g$ is defined to be an extremal surface with respect to $\c$. Again $\a=0$ gives $\c=\s$.

Repeating an analysis similar to that in the main text, we find that the retarded solution in conformal gauge is given by
\ba
\w &= \pi s\q(-u) \q(v),\\
\td\y &= \td\y\big|_{s=0} + 2\a\pi \l s \r(0) \[ \d(u) \q(v) \pa_v \y(0) - \q(-u) \d(v) \pa_u \y(0) \],\\
\td\r &= \td\r\big|_{s=0} + 2\a\pi \l s \q(-u) \q(v) \pa_u\y(0) \pa_v\y(0),\\
\td\p &= \td\p\big|_{s=0} + 4\a\pi \l s \q(-u) \q(v) \r(0) \pa_u\y(0) \pa_v\y(0).
\ea
In terms of the original fields, the solution takes the form
\ba
\y &= \y\big|_{s=0} +2\pi \l s \q(-u) \q(v) \[\pa_u \r \pa_v \y + (u\lra v)\] \\
&\qqu\qu + 2\a\pi \l s \r(0) \[ \d(u) \q(v) \pa_v \y(0) - \q(-u) \d(v) \pa_u \y(0) \],\\
\r &= \r\big|_{s=0} +2\pi \l s \q(-u) \q(v) \pa_u \y \pa_v \y + 2\a\pi \l s \q(-u) \q(v) \pa_u\y(0) \pa_v\y(0),\\
\p &= \p\big|_{s=0} -4\pi \l s \q(-u) \q(v) \r \pa_u \y \pa_v \y + 4\a\pi \l s \q(-u) \q(v) \r(0) \pa_u\y(0) \pa_v\y(0).
\ea

The question of whether there is generally a natural prescription for the retarded solution for $\c=\s$ can be phrased here as follows: how is $\a=0$ special? Based on the results above, it might be tempting to guess that,  to determine the retarded solution for $\s$ (corresponding to $\a=0$ above), we only need to require that none of $\y,\r,\p$ have a shockwave on the horizon. But it is easy to find another example where this is not sufficient: for general $\a$ we can simply consider another generator $\c=\s - \a \l \r |_\g$, in which case the retarded solution is the same as the $\a=0$ solution above, except that $\r$ contains an additional term $\fr{\a}{2} \l s \q(-u)\q(v)$. This solution has no shockwave either. A similar shockwave-free family of generators for Theory 1 is given by $\c=\s - \a \l \y |_\g$.
Therefore, it seems difficult to find a simple, universal prescription for the explicit form of the geometric entropy flow in general theories.

\section{Wald Entropy Generates BCP Kink Transform in Killing Horizon Cases}
\label{app:sigmaW}

In cases of Killing horizons, the geometric entropy reduces to the Wald entropy, and the associated flow is a BCP kink transform. We now show this.

We use the notation $L=\cL \ve$ where $\ve=\sqrt{-g} dx^0 \wg dx^1 \wg \cd \wg dx^{D-1}$ is the volume form.

\subsection{f(Riemann) gravity}

For simplicity, first consider f(Riemann) gravity. We view $\cL$ as a function of $g_{\m\n}$ and $R_{\m\r\n\s}$. Note that $g^{\m\n}$ is considered a function of $g_{\m\n}$. Thus
\be
\D \cL = \fr{\pa \cL}{\pa g_{\m\n}} \D g_{\m\n} + \fr{\pa \cL}{\pa R_{\m\r\n\s}} \D R_{\m\r\n\s}
= \fr{\pa \cL}{\pa g_{\m\n}} \D g_{\m\n} - 2 \fr{\pa \cL}{\pa R_{\m\r\n\s}} \D g_{\m\n;\r\s}.
\ee
We then find
\be
\D L = \ve \(\fr{\pa \cL}{\pa g_{\m\n}} +\fr{1}{2} \cL g^{\m\n}\) \D g_{\m\n} -2 \ve \fr{\pa \cL}{\pa R_{\m\r\n\s}} \D g_{\m\n;\r\s}.
\ee
Using $\D L = E_a \dc\p^a + d\q$, we find
\be\la{qfriem}
\q = \td\q \cdot \ve,\qqu
\td\q^\s = -2 \fr{\pa \cL}{\pa R_{\m\r\n\s}} \D g_{\m\n;\r} +2 \(\na_\r\fr{\pa \cL}{\pa R_{\m\s\n\r}}\) \D g_{\m\n}.
\ee

We now show
\be\la{lqcond}
\lt.\(\cL_{X^{h_R}} \tQ_{\S_{0^+}}^{\cC}\) \rt|_{\tP} =0,
\ee
where $h_R$ is the retarded solution obtaining by removing a piece of the Rindler wedge. We will use Gaussian normal coordinates $(t,x,y^i)$ near $\S_{0}$, such that $t$ is the proper time away from $\S_{0}$ and $x$ is the proper distance on $\S_{0}$ away from $\g$. Note that for the data on $\S_{0^+}$ (defined to be $t=0^+$), $X^{h_R}$ changes $K_{xx}$ by
\be\la{changekxx}
\cL_{X^{h_R}} g_{xx,t} = X^{h_R} \cdot \D g_{xx,t} = 2 X^{h_R} \cdot \D K_{xx} = 4\pi \d(x),
\ee
while keeping everything else (the induced metric $h_{ab}$, all other components of $K_{ab}$, and the Riemann tensor) unchanged. Recall
\be
\tQ_{\S_{0^+}}^{\cC} = \int_{\S_{0^+}} \(\q -dC\),
\ee
which can be written as
\be
\tQ_{\S_{0^+}}^{\cC} = \int_{\S_{0^+}} dx d^{D-2}y \sqrt{\g}\, \td\q^t -\int_{\pa\S_{0^+}} C,
\ee
where $\g$ is the determinant of the induced metric on the HRT surface. Since $X^{h_R}$ does not affect $\pa\S_{0^+}$, we find
\ba
\cL_{X^{h_R}} \tQ_{\S_{0^+}}^{\cC} &= \int_{\S_{0^+}} dx d^{D-2}y \sqrt{\g}\, \cL_{X^{h_R}} \td\q^t\\
&= -2\int_{\S_{0^+}} dx d^{D-2}y \sqrt{\g}\, \fr{\pa \cL}{\pa R_{\m\r\n t}} \cL_{X^{h_R}} \D g_{\m\n;\r}\\
&= -2\int_{\S_{0^+}} dx d^{D-2}y \sqrt{\g}\, \fr{\pa \cL}{\pa R_{xtxt}} \cL_{X^{h_R}} \D g_{xx,t}\\
&= -2\int_{\S_{0^+}} dx d^{D-2}y \sqrt{\g}\, \fr{\pa \cL}{\pa R_{xtxt}} \D (4\pi\d(x))\\
&=0.\la{lxhrq}
\ea

Using the point-by-point version of Lemma~\ref{lem1} from Appendix~\re{app:lorentzian}, if follows that on such solutions
the generator $g$ is equal (up to an additive constant) to
\be
\td g= \lt.\( X^{h_R} \cdot \tQ_{\S_{0^+}}^{\cC} \)\rt|_{\tP}.
\ee
It is straightforward to verify
\ba
X^{h_R} \cdot \tQ_{\S_{0^+}}^{\cC} &= -2\int_{\S_{0^+}} dx d^{D-2}y \sqrt{\g}\, \fr{\pa \cL}{\pa R_{xtxt}} X^{h_R} \cdot \D g_{xx,t}\\
&= -2\int_{\S_{0^+}} dx d^{D-2}y \sqrt{\g}\, \fr{\pa \cL}{\pa R_{xtxt}} 4\pi\d(x)\\
&= -8\pi \int_{\g} d^{D-2}y \sqrt{\g}\, \fr{\pa \cL}{\pa R_{xtxt}}.\la{xhrq}
\ea
To see that this is exactly the Wald entropy, recall
\be
S_{\Wald} = -2\pi \int_{\g} d^{D-2}y \sqrt{\g}\, \fr{\pa \cL}{\pa R_{\m\r\n\s}} \e_{\m\r} \e_{\n\s},
\ee
where $\e$ is the 2-dimensional Levi-Civita tensor with $\e_{tx}=-\e_{xt}=1$. Thus
\be
S_{\Wald} = -8\pi \int_{\g} d^{D-2}y \sqrt{\g}\, \fr{\pa \cL}{\pa R_{xtxt}},
\ee
in precise agreement with \er{xhrq}. Therefore, the Wald entropy generates the BCP kink transform in Killing horizon cases.

\subsection{General theories of gravity}
We now extend the argument above to general theories of gravity described by a diffeomorphism invariant Lagrangian containing arbitrary matter fields, as studied in Ref.~\cite{Iyer:1994ys}. In particular, derivatives of Riemann tensors are allowed in the Lagrangian.

Lemma 3.1 of Ref.~\cite{Iyer:1994ys} showed that in such theories, $\q$ can be chosen to have the form
\be
\q = 2 E_R^{\m\n\r} \na_\r \D g_{\m\n} +\q'
\ee
where $\q'$ is a linear combination of $\D g_{\m\n}$, $\D \na_{(\m_1} \cd \na_{m_i)} R_{\m\r\n\s}$, and $\D \na_{(\m_1} \cd \na_{m_i)} \y$ (where $\y$ is any matter field); in other words, the $\D$'s in $\q'$ are always to the left of derivatives of the dynamical fields. Moreover, $E_R^{\m\n\r}$ is defined as
\be
(E_R^{\m\n\r})_{\s_2\cd \s_D} = E_R^{\s\m\n\r} \ve_{\s\s_2\cd \s_D}
\ee
where $E_R^{\s\m\n\r}$ is the equation of motion that would be obtained for $R_{\s\m\n\r}$ if viewed as an independent field.

Using the same Gaussian normal coordinates as in the previous subsection, we again find that $X^{h_R}$ changes $g_{xx,t}$ according to \er{changekxx} while keeping everything else unchanged. In particular, it keeps $\q'$ unchanged. Thus, we find the generalization of \er{lxhrq}:
\ba
\cL_{X^{h_R}} \tQ_{\S_{0^+}}^{\cC} &= \int_{\S_{0^+}} \cL_{X^{h_R}} \(\q -dC\)\\
&= \int_{\S_{0^+}} dx d^{D-2}y \sqrt{\g}\, \cL_{X^{h_R}} \(2 E_R^{t \m\n\r} \na_\r \D g_{\m\n}\)\\
&= 2\int_{\S_{0^+}} dx d^{D-2}y \sqrt{\g}\, E_R^{txxt} \D (4\pi\d(x))\\
&=0.
\ea
Moreover, we find the generalization of \er{xhrq}:
\ba
X^{h_R} \cdot \tQ_{\S_{0^+}}^{\cC} &= 2\int_{\S_{0^+}} dx d^{D-2}y \sqrt{\g}\, E_R^{txxt} X^{h_R} \cdot \D g_{xx,t}\\
&= 2\int_{\S_{0^+}} dx d^{D-2}y \sqrt{\g}\, E_R^{txxt} 4\pi\d(x)\\
&= 8\pi \int_{\g} d^{D-2}y \sqrt{\g}\, E_R^{txxt}.\la{xhrqgen}
\ea

To see that this is exactly the Wald entropy, recall the definition of Ref.~\cite{Iyer:1994ys}:
\be
S_{\Wald} = 2\pi \int_{\g} X^{\n\r} \e_{\n\r}, \qu
(X^{\n\r})_{\s_3\cd \s_D} = -E_R^{\s\m\n\r} \ve_{\s\m \s_3\cd \s_D},
\ee
which becomes in our Gaussian normal coordinates
\be
S_{\Wald} = -2\pi \int_{\g} d^{D-2}y \sqrt{\g}\, 2E_R^{tx \n\r} \e_{\n\r} = 8\pi \int_{\g} d^{D-2}y \sqrt{\g}\, E_R^{txxt}.
\ee
This precisely agrees with \er{xhrqgen}. Again using the point-by-point version of the Lemma~\ref{lem1} from Appendix~\re{app:lorentzian}, we conclude that the Wald entropy generates the BCP kink transform in Killing horizon cases in general theories of gravity.

\addcontentsline{toc}{section}{References}
\bibliographystyle{JHEP}
\bibliography{references}

\providecommand{\href}[2]{#2}\begingroup\raggedright\begin{thebibliography}{10}

\bibitem{Bekenstein:1973ur}
J.~D. Bekenstein, \emph{{Black holes and entropy}},
  \href{https://doi.org/10.1103/PhysRevD.7.2333}{\emph{Phys. Rev. D} {\bfseries
  7} (1973) 2333}.

\bibitem{tHooft:1993dmi}
G.~'t~Hooft, \emph{{Dimensional reduction in quantum gravity}}, {\emph{Conf.
  Proc. C} {\bfseries 930308} (1993) 284}
  [\href{https://arxiv.org/abs/gr-qc/9310026}{{\ttfamily gr-qc/9310026}}].

\bibitem{Susskind:1994vu}
L.~Susskind, \emph{{The World as a hologram}},
  \href{https://doi.org/10.1063/1.531249}{\emph{J. Math. Phys.} {\bfseries 36}
  (1995) 6377} [\href{https://arxiv.org/abs/hep-th/9409089}{{\ttfamily
  hep-th/9409089}}].

\bibitem{Maldacena_1999}
J.~M. Maldacena, \emph{{The Large N limit of superconformal field theories and
  supergravity}}, \href{https://doi.org/10.1023/A:1026654312961,
  10.4310/ATMP.1998.v2.n2.a1}{\emph{Int. J. Theor. Phys.} {\bfseries 38} (1999)
  1113} [\href{https://arxiv.org/abs/hep-th/9711200}{{\ttfamily
  hep-th/9711200}}].

\bibitem{Ryu:2006bv}
S.~Ryu and T.~Takayanagi, \emph{{Holographic derivation of entanglement entropy
  from AdS/CFT}},
  \href{https://doi.org/10.1103/PhysRevLett.96.181602}{\emph{Phys. Rev. Lett.}
  {\bfseries 96} (2006) 181602}
  [\href{https://arxiv.org/abs/hep-th/0603001}{{\ttfamily hep-th/0603001}}].

\bibitem{Ryu:2006ef}
S.~Ryu and T.~Takayanagi, \emph{{Aspects of Holographic Entanglement Entropy}},
  \href{https://doi.org/10.1088/1126-6708/2006/08/045}{\emph{JHEP} {\bfseries
  08} (2006) 045} [\href{https://arxiv.org/abs/hep-th/0605073}{{\ttfamily
  hep-th/0605073}}].

\bibitem{Hubeny:2007xt}
V.~E. Hubeny, M.~Rangamani and T.~Takayanagi, \emph{{A Covariant holographic
  entanglement entropy proposal}},
  \href{https://doi.org/10.1088/1126-6708/2007/07/062}{\emph{JHEP} {\bfseries
  07} (2007) 062} [\href{https://arxiv.org/abs/0705.0016}{{\ttfamily
  0705.0016}}].

\bibitem{Wald:1993nt}
R.~M. Wald, \emph{{Black hole entropy is the Noether charge}},
  \href{https://doi.org/10.1103/PhysRevD.48.R3427}{\emph{Phys. Rev. D}
  {\bfseries 48} (1993) R3427}
  [\href{https://arxiv.org/abs/gr-qc/9307038}{{\ttfamily gr-qc/9307038}}].

\bibitem{Dong:2013qoa}
X.~Dong, \emph{{Holographic Entanglement Entropy for General Higher Derivative
  Gravity}}, \href{https://doi.org/10.1007/JHEP01(2014)044}{\emph{JHEP}
  {\bfseries 01} (2014) 044} [\href{https://arxiv.org/abs/1310.5713}{{\ttfamily
  1310.5713}}].

\bibitem{Camps:2013zua}
J.~Camps, \emph{{Generalized entropy and higher derivative Gravity}},
  \href{https://doi.org/10.1007/JHEP03(2014)070}{\emph{JHEP} {\bfseries 03}
  (2014) 070} [\href{https://arxiv.org/abs/1310.6659}{{\ttfamily 1310.6659}}].

\bibitem{Miao:2014nxa}
R.-X. Miao and W.-z. Guo, \emph{{Holographic Entanglement Entropy for the Most
  General Higher Derivative Gravity}},
  \href{https://doi.org/10.1007/JHEP08(2015)031}{\emph{JHEP} {\bfseries 08}
  (2015) 031} [\href{https://arxiv.org/abs/1411.5579}{{\ttfamily 1411.5579}}].

\bibitem{Lewkowycz:2013nqa}
A.~Lewkowycz and J.~Maldacena, \emph{{Generalized gravitational entropy}},
  \href{https://doi.org/10.1007/JHEP08(2013)090}{\emph{JHEP} {\bfseries 08}
  (2013) 090} [\href{https://arxiv.org/abs/1304.4926}{{\ttfamily 1304.4926}}].

\bibitem{Dong:2016hjy}
X.~Dong, A.~Lewkowycz and M.~Rangamani, \emph{{Deriving covariant holographic
  entanglement}}, \href{https://doi.org/10.1007/JHEP11(2016)028}{\emph{JHEP}
  {\bfseries 11} (2016) 028}
  [\href{https://arxiv.org/abs/1607.07506}{{\ttfamily 1607.07506}}].

\bibitem{Colin-Ellerin:2020mva}
S.~Colin-Ellerin, X.~Dong, D.~Marolf, M.~Rangamani and Z.~Wang,
  \emph{{Real-time gravitational replicas: Formalism and a variational
  principle}},  \href{https://arxiv.org/abs/2012.00828}{{\ttfamily
  2012.00828}}.

\bibitem{Colin-Ellerin:2021jev}
S.~Colin-Ellerin, X.~Dong, D.~Marolf, M.~Rangamani and Z.~Wang,
  \emph{{Real-time gravitational replicas: Low dimensional examples}},
  \href{https://arxiv.org/abs/2105.07002}{{\ttfamily 2105.07002}}.

\bibitem{Colafranceschi:2023moh}
E.~Colafranceschi, X.~Dong, D.~Marolf and Z.~Wang, \emph{{Algebras and Hilbert
  spaces from gravitational path integrals. Understanding Ryu-Takayanagi/HRT as
  entropy without AdS/CFT}},
  \href{https://doi.org/10.1007/JHEP10(2024)063}{\emph{JHEP} {\bfseries 10}
  (2024) 063} [\href{https://arxiv.org/abs/2310.02189}{{\ttfamily
  2310.02189}}].

\bibitem{Sanches:2016sxy}
F.~Sanches and S.~J. Weinberg, \emph{{Holographic entanglement entropy
  conjecture for general spacetimes}},
  \href{https://doi.org/10.1103/PhysRevD.94.084034}{\emph{Phys. Rev. D}
  {\bfseries 94} (2016) 084034}
  [\href{https://arxiv.org/abs/1603.05250}{{\ttfamily 1603.05250}}].

\bibitem{Nomura:2018kji}
Y.~Nomura, P.~Rath and N.~Salzetta, \emph{{Pulling the Boundary into the
  Bulk}}, \href{https://doi.org/10.1103/PhysRevD.98.026010}{\emph{Phys. Rev. D}
  {\bfseries 98} (2018) 026010}
  [\href{https://arxiv.org/abs/1805.00523}{{\ttfamily 1805.00523}}].

\bibitem{Dong:2020uxp}
X.~Dong, X.-L. Qi, Z.~Shangnan and Z.~Yang, \emph{{Effective entropy of quantum
  fields coupled with gravity}},
  \href{https://doi.org/10.1007/JHEP10(2020)052}{\emph{JHEP} {\bfseries 10}
  (2020) 052} [\href{https://arxiv.org/abs/2007.02987}{{\ttfamily
  2007.02987}}].

\bibitem{Grado-White:2020wlb}
B.~Grado-White, D.~Marolf and S.~J. Weinberg, \emph{{Radial Cutoffs and
  Holographic Entanglement}},
  \href{https://doi.org/10.1007/JHEP01(2021)009}{\emph{JHEP} {\bfseries 01}
  (2021) 009} [\href{https://arxiv.org/abs/2008.07022}{{\ttfamily
  2008.07022}}].

\bibitem{Bousso:2022hlz}
R.~Bousso and G.~Penington, \emph{{Entanglement wedges for gravitating
  regions}}, \href{https://doi.org/10.1103/PhysRevD.107.086002}{\emph{Phys.
  Rev. D} {\bfseries 107} (2023) 086002}
  [\href{https://arxiv.org/abs/2208.04993}{{\ttfamily 2208.04993}}].

\bibitem{Bousso:2023sya}
R.~Bousso and G.~Penington, \emph{{Holograms in our world}},
  \href{https://doi.org/10.1103/PhysRevD.108.046007}{\emph{Phys. Rev. D}
  {\bfseries 108} (2023) 046007}
  [\href{https://arxiv.org/abs/2302.07892}{{\ttfamily 2302.07892}}].

\bibitem{Dong:2017xht}
X.~Dong and A.~Lewkowycz, \emph{{Entropy, Extremality, Euclidean Variations,
  and the Equations of Motion}},
  \href{https://doi.org/10.1007/JHEP01(2018)081}{\emph{JHEP} {\bfseries 01}
  (2018) 081} [\href{https://arxiv.org/abs/1705.08453}{{\ttfamily
  1705.08453}}].

\bibitem{Dong:2019piw}
X.~Dong and D.~Marolf, \emph{{One-loop universality of holographic codes}},
  \href{https://doi.org/10.1007/JHEP03(2020)191}{\emph{JHEP} {\bfseries 03}
  (2020) 191} [\href{https://arxiv.org/abs/1910.06329}{{\ttfamily
  1910.06329}}].

\bibitem{xi}
X.~Dong, D.~Marolf and P.~Rath, \emph{{The JLMS Formula, Modular Flow and the
  Area Operator}},  \href{https://arxiv.org/abs/to appear}{{\ttfamily to
  appear}}.

\bibitem{Kaplan:2022orm}
M.~Kaplan and D.~Marolf, \emph{{The action of HRT-areas as operators in
  semiclassical gravity}},  \href{https://arxiv.org/abs/2203.04270}{{\ttfamily
  2203.04270}}.

\bibitem{Bousso:2020yxi}
R.~Bousso, V.~Chandrasekaran, P.~Rath and A.~Shahbazi-Moghaddam, \emph{{Gravity
  dual of Connes cocycle flow}},
  \href{https://doi.org/10.1103/PhysRevD.102.066008}{\emph{Phys. Rev. D}
  {\bfseries 102} (2020) 066008}
  [\href{https://arxiv.org/abs/2007.00230}{{\ttfamily 2007.00230}}].

\bibitem{Peierls:1952cb}
R.~E. Peierls, \emph{{The Commutation laws of relativistic field theory}},
  \href{https://doi.org/10.1098/rspa.1952.0158}{\emph{Proc. Roy. Soc. Lond. A}
  {\bfseries 214} (1952) 143}.

\bibitem{Kaplan:2023oaj}
M.~Kaplan, \emph{{The Action of Geometric Entropy in Topologically Massive
  Gravity}},  \href{https://arxiv.org/abs/2308.09763}{{\ttfamily 2308.09763}}.

\bibitem{Iyer:1994ys}
V.~Iyer and R.~M. Wald, \emph{{Some properties of Noether charge and a proposal
  for dynamical black hole entropy}},
  \href{https://doi.org/10.1103/PhysRevD.50.846}{\emph{Phys. Rev. D} {\bfseries
  50} (1994) 846} [\href{https://arxiv.org/abs/gr-qc/9403028}{{\ttfamily
  gr-qc/9403028}}].

\bibitem{Dong:2023bax}
X.~Dong, G.~N. Remmen, D.~Wang, W.~W. Weng and C.-H. Wu, \emph{{Holographic
  entanglement from the UV to the IR}},
  \href{https://doi.org/10.1007/JHEP11(2023)207}{\emph{JHEP} {\bfseries 11}
  (2023) 207} [\href{https://arxiv.org/abs/2308.07952}{{\ttfamily
  2308.07952}}].

\bibitem{Gibbons:1976ue}
G.~W. Gibbons and S.~W. Hawking, \emph{{Action Integrals and Partition
  Functions in Quantum Gravity}},
  \href{https://doi.org/10.1103/PhysRevD.15.2752}{\emph{Phys. Rev.} {\bfseries
  D15} (1977) 2752}.

\bibitem{Schwinger:1951ex}
J.~S. Schwinger, \emph{{On the Green's functions of quantized fields. 1.}},
  \href{https://doi.org/10.1073/pnas.37.7.452}{\emph{Proc. Nat. Acad. Sci.}
  {\bfseries 37} (1951) 452}.

\bibitem{DeWitt:1964mxt}
B.~S. DeWitt, \emph{{Dynamical theory of groups and fields}}, {\emph{Conf.
  Proc. C} {\bfseries 630701} (1964) 585}.

\bibitem{DeWitt:1984ojp}
B.~S. DeWitt and R.~Stora, eds., \emph{{Relativity, groups and topology:
  Proceedings, 40th Summer School of Theoretical Physics - Session 40}: {Les
  Houches, France, June 27 - August 4, 1983, vol. 2}}, (Amsterdam),
  North-holland, 1984.

\bibitem{DeWitt:2003pm}
B.~S. DeWitt, \emph{{The global approach to quantum field theory. Vol. 1, 2}},
  vol.~114. Oxford University Press, 2003.

\bibitem{Marolf:1993af}
D.~M. Marolf, \emph{{The Generalized Peierls bracket}},
  \href{https://doi.org/10.1006/aphy.1994.1117}{\emph{Annals Phys.} {\bfseries
  236} (1994) 392} [\href{https://arxiv.org/abs/hep-th/9308150}{{\ttfamily
  hep-th/9308150}}].

\bibitem{Ostrogradsky:1850fid}
M.~Ostrogradsky, \emph{{M\'emoires sur les \'equations diff\'erentielles,
  relatives au probl\`eme des isop\'erim\`etres}}, {\emph{Mem. Acad. St.
  Petersbourg} {\bfseries 6} (1850) 385}.

\bibitem{Jafferis:2015del}
D.~L. Jafferis, A.~Lewkowycz, J.~Maldacena and S.~J. Suh, \emph{{Relative
  entropy equals bulk relative entropy}},
  \href{https://doi.org/10.1007/JHEP06(2016)004}{\emph{JHEP} {\bfseries 06}
  (2016) 004} [\href{https://arxiv.org/abs/1512.06431}{{\ttfamily
  1512.06431}}].

\bibitem{Jafferis:2014lza}
D.~L. Jafferis and S.~J. Suh, \emph{{The Gravity Duals of Modular
  Hamiltonians}}, \href{https://doi.org/10.1007/JHEP09(2016)068}{\emph{JHEP}
  {\bfseries 09} (2016) 068} [\href{https://arxiv.org/abs/1412.8465}{{\ttfamily
  1412.8465}}].

\bibitem{Lewkowycz:2018sgn}
A.~Lewkowycz and O.~Parrikar, \emph{{The holographic shape of entanglement and
  Einstein\textquoteright{}s equations}},
  \href{https://doi.org/10.1007/JHEP05(2018)147}{\emph{JHEP} {\bfseries 05}
  (2018) 147} [\href{https://arxiv.org/abs/1802.10103}{{\ttfamily
  1802.10103}}].

\bibitem{Chen:2018rgz}
Y.~Chen, X.~Dong, A.~Lewkowycz and X.-L. Qi, \emph{{Modular Flow as a
  Disentangler}}, \href{https://doi.org/10.1007/JHEP12(2018)083}{\emph{JHEP}
  {\bfseries 12} (2018) 083}
  [\href{https://arxiv.org/abs/1806.09622}{{\ttfamily 1806.09622}}].

\bibitem{Faulkner:2018faa}
T.~Faulkner, M.~Li and H.~Wang, \emph{{A modular toolkit for bulk
  reconstruction}}, \href{https://doi.org/10.1007/JHEP04(2019)119}{\emph{JHEP}
  {\bfseries 04} (2019) 119}
  [\href{https://arxiv.org/abs/1806.10560}{{\ttfamily 1806.10560}}].

\bibitem{Bousso:2019dxk}
R.~Bousso, V.~Chandrasekaran and A.~Shahbazi-Moghaddam, \emph{{From black hole
  entropy to energy-minimizing states in QFT}},
  \href{https://doi.org/10.1103/PhysRevD.101.046001}{\emph{Phys. Rev. D}
  {\bfseries 101} (2020) 046001}
  [\href{https://arxiv.org/abs/1906.05299}{{\ttfamily 1906.05299}}].

\bibitem{Engelhardt:2021mue}
N.~Engelhardt, G.~Penington and A.~Shahbazi-Moghaddam, \emph{{A world without
  pythons would be so simple}},
  \href{https://doi.org/10.1088/1361-6382/ac2de5}{\emph{Class. Quant. Grav.}
  {\bfseries 38} (2021) 234001}
  [\href{https://arxiv.org/abs/2102.07774}{{\ttfamily 2102.07774}}].

\bibitem{Donnelly:2016auv}
W.~Donnelly and L.~Freidel, \emph{{Local subsystems in gauge theory and
  gravity}}, \href{https://doi.org/10.1007/JHEP09(2016)102}{\emph{JHEP}
  {\bfseries 09} (2016) 102}
  [\href{https://arxiv.org/abs/1601.04744}{{\ttfamily 1601.04744}}].

\bibitem{Speranza:2017gxd}
A.~J. Speranza, \emph{{Local phase space and edge modes for
  diffeomorphism-invariant theories}},
  \href{https://doi.org/10.1007/JHEP02(2018)021}{\emph{JHEP} {\bfseries 02}
  (2018) 021} [\href{https://arxiv.org/abs/1706.05061}{{\ttfamily
  1706.05061}}].

\bibitem{Chandrasekaran:2019ewn}
V.~Chandrasekaran and K.~Prabhu, \emph{{Symmetries, charges and conservation
  laws at causal diamonds in general relativity}},
  \href{https://doi.org/10.1007/JHEP10(2019)229}{\emph{JHEP} {\bfseries 10}
  (2019) 229} [\href{https://arxiv.org/abs/1908.00017}{{\ttfamily
  1908.00017}}].

\bibitem{Chandrasekaran:2022eqq}
V.~Chandrasekaran, G.~Penington and E.~Witten, \emph{{Large N algebras and
  generalized entropy}},  \href{https://arxiv.org/abs/2209.10454}{{\ttfamily
  2209.10454}}.

\bibitem{Kudler-Flam:2023qfl}
J.~Kudler-Flam, S.~Leutheusser and G.~Satishchandran, \emph{{Generalized black
  hole entropy is von Neumann entropy}},
  \href{https://doi.org/10.1103/PhysRevD.111.025013}{\emph{Phys. Rev. D}
  {\bfseries 111} (2025) 025013}
  [\href{https://arxiv.org/abs/2309.15897}{{\ttfamily 2309.15897}}].

\bibitem{Chen:2024rpx}
C.-H. Chen and G.~Penington, \emph{{A clock is just a way to tell the time:
  gravitational algebras in cosmological spacetimes}},
  \href{https://arxiv.org/abs/2406.02116}{{\ttfamily 2406.02116}}.

\bibitem{Kudler-Flam:2024psh}
J.~Kudler-Flam, S.~Leutheusser and G.~Satishchandran, \emph{{Algebraic
  Observational Cosmology}},
  \href{https://arxiv.org/abs/2406.01669}{{\ttfamily 2406.01669}}.

\bibitem{Harlow:2019yfa}
D.~Harlow and J.-Q. Wu, \emph{{Covariant phase space with boundaries}},
  \href{https://doi.org/10.1007/JHEP10(2020)146}{\emph{JHEP} {\bfseries 10}
  (2020) 146} [\href{https://arxiv.org/abs/1906.08616}{{\ttfamily
  1906.08616}}].

\bibitem{Henneaux:1994lbw}
M.~Henneaux and C.~Teitelboim, \emph{{Quantization of Gauge Systems}}.
  Princeton University Press, 8, 1994.

\bibitem{Cheng:2001du}
T.-C. Cheng, P.-M. Ho and M.-C. Yeh, \emph{{Perturbative approach to higher
  derivative and nonlocal theories}},
  \href{https://doi.org/10.1016/S0550-3213(02)00020-2}{\emph{Nucl. Phys. B}
  {\bfseries 625} (2002) 151}
  [\href{https://arxiv.org/abs/hep-th/0111160}{{\ttfamily hep-th/0111160}}].

\bibitem{andrzejewski2007canonical}
K.~Andrzejewski, K.~Bolonek, J.~Gonera and P.~Ma{\'s}lanka, \emph{Canonical
  formalism and quantization of the perturbative sector of higher-derivative
  theories}, {\emph{Physical Review A—Atomic, Molecular, and Optical Physics}
  {\bfseries 76} (2007) 032110}.

\bibitem{Harlow:2018tqv}
D.~Harlow and D.~Jafferis, \emph{{The Factorization Problem in
  Jackiw-Teitelboim Gravity}},
  \href{https://doi.org/10.1007/JHEP02(2020)177}{\emph{JHEP} {\bfseries 02}
  (2020) 177} [\href{https://arxiv.org/abs/1804.01081}{{\ttfamily
  1804.01081}}].

\bibitem{Dirac}
P.~Dirac, \emph{{Lectures on Quantum Mechanics}}. Belfer Graduate School of
  Science, Yeshiva University, New York, NY, 1964.

\end{thebibliography}\endgroup

\end{document}